\definecolor{orange}{rgb}{1,0.5,0}
\newcommand{\cmt}[1]{$\{$#1$\}$}
\newcommand{\comment}[1]{//#1}
\newcommand{\NP}{\mathcal{NP}}
\newcommand{\spring}{\mathtt{spring}}
\newcommand{\cros}{\mathtt{cros}}
\newcommand{\ang}{\mathtt{angle}}
\newcommand{\angular}{\mathtt{angular}}
\newcommand{\eat}[1] {{}}
\newcommand{\nplus}[1]{$=^{^{\hspace{-8pt}\small(#1)}}$}
\definecolor{orange}{rgb}{1,0.5,0}
\begin{document}

\title{Maximizing the Total Resolution of Graphs}

\titlerunning{Maximizing the Total Resolution of Graphs}

\toctitle{Maximizing the Total Resolution of Graphs}

\author{Evmorfia N.\ Argyriou \inst{1}, Michael A.\ Bekos \inst{1} and Antonios Symvonis \inst{1}}

\authorrunning{E.N. Argyriou, M.A. Bekos, A. Symvonis}

\tocauthor{Evmorfia N. Argyriou, Michael A. Bekos, Antonios Symvonis}

\authorrunning{E.N. Argyriou, M.A. Bekos, A. Symvonis}

\institute{%
    School of Applied Mathematical \& Physical Sciences,\\
    National Technical University of Athens, Greece\\
    \email{$\{$fargyriou,mikebekos,symvonis$\}$@math.ntua.gr}
}

\maketitle              

\begin{abstract}

A major factor affecting the readability of a graph drawing is its
resolution. In the graph drawing literature, the resolution of a
drawing is either measured based on the angles formed by consecutive
edges incident to a common node (angular resolution) or by the
angles formed at edge crossings (crossing resolution). In this
paper, we evaluate both by introducing the notion of ``total
resolution'', that is, the minimum of the angular and crossing
resolution. To the best of our knowledge, this is the first time
where the problem of maximizing the total resolution of a drawing is
studied.

The main contribution of the paper consists of drawings of
asymptotically optimal total resolution for complete graphs
(circular drawings) and for complete bipartite graphs ($2$-layered
drawings). In addition, we present and experimentally evaluate a
force-directed based algorithm that constructs drawings of large
total resolution.\\~\\Date: \emph{September 10, 2010.}

\end{abstract}

\section{Introduction}
\label{sec:introduction}

Graphs are widely used to depict relations between objects. There
exist several criteria that have been used to judge the quality of a
graph drawing \cite{DBTT94,KW01}. From a human point of view, it is
necessary to obtain drawings that are easy-to-read, i.e., they
should nicely convey the structure of the objects and their
relationships. From an algorithmic point of view, the quality of a
drawing is usually evaluated by some objective function and the main
task is to determine a drawing that minimizes or maximizes the
specific objective function. Various such functions have been
studied by the graph drawing community, among them, the number of
crossings among pairs of edges, the number of edge bends, the
maximum edge length, the total area occupied by the drawing and so
on.

Over the last few decades, much research effort has been devoted to
the problem of reducing the number of crossings. This is reasonable,
since it is commonly accepted that edge crossings may negatively
affect the quality of a drawing. Towards this direction, there also
exist eye-tracking experiments that confirm the negative impact of
edge crossings on the human understanding of a graph drawing
\cite{P00,PCA02,CPCM02}. However, the computational complexity of
the edge crossing minimization problem, which is $\NP$-complete in
general \cite{GJ83}, implies that the computation of high-quality
drawings of dense graph is difficult to achieve.

Apart from the edge crossings, another undesired property that may
negatively influence the readability of a drawing is the presence of
edges that are too close to each other, especially if these edges
are adjacent. Thus, maximizing the angles among incident edges
becomes an important aesthetic criterion, since there is some
correlation between the involved angles and the visual
distinctiveness of the edges.

Motivated by the cognitive experiments by Huang et al.\
\cite{Hu07,HHE08} that indicate that the negative impact of an edge
crossing is eliminated in the case where the crossing angle is
greater than $70$ degrees, we study a new graph drawing scenario in
which both \emph{angular} and \emph{crossing
resolution}\footnote{The term \emph{angular resolution} denotes the
smallest angle formed by two adjacent edges incident to a common
node, whereas the term \emph{crossing resolution} refers to the
smallest angle formed by a pair of crossing edges.} are taken into
account in order to produce a straight-line drawing of a given
graph. To the best of our knowledge, this is the first attempt,
where both angular and crossing resolution are combined to produce
drawings.
We prove that the classes of complete and complete bipartite graphs
admit drawings that asymptotically maximize the minimum of the
angular and crossing resolution (Section~\ref{sec:complete-graphs}).
We also present a more practical, force-directed based algorithm
that constructs drawings of large angular and crossing resolution
(Section~\ref{sec:force-directed-algorithm}).

\subsection{Previous Work}
\label{sec:previous-work}

Formann et al.\ \cite{FHHKLSWW93} were the first to study the
angular resolution of straight-line drawings. They proved that
deciding whether a graph of maximum degree $d$ admits a drawing of
angular resolution $\frac{2 \pi}{d}$ (i.e., the obvious upper bound) is
$\NP$-hard. They also proved that several types of graphs of
maximum degree $d$
have angular resolution $\Theta(\frac{1}{d})$. Malitz and Papakostas
\cite{MP92} proved that any planar graph of maximum degree $d$,
admits a planar straight-line drawing with angular resolution
$\Omega(\frac{1}{7^d})$. Garg and Tamassia \cite{GT94} showed a
continuous tradeoff between the area and the angular resolution of
planar straight-line drawings. Gutwenger and Mutzel \cite{GM98} gave
a linear time and space algorithm that constructs a planar polyline
grid drawing of a connected planar graph with $n$ vertices and
maximum degree $d$ on a $(2n-5)\times(\frac{3}{2}n-\frac{7}{2})$
grid with at most $5n-15$ bends and minimum angle greater than
$\frac{2}{d}$. Bodlaender and Tel \cite{BT04} showed that planar
graphs with angular resolution at least $\frac{\pi}{2}$ are
rectilinear. Recently, Lin and Yen \cite{LY05} presented a
force-directed method based on edge-edge repulsion that leads to
drawings with high angular resolution. In their work, pairs of edges
incident to a common node are modeled as charged springs, that repel
each other.


A graph is called \emph{right angle crossing} (or \emph{RAC} for
short) graph if it admits a polyline drawing in which every pair of
crossing edges intersects at right angle. Didimo et al.\
\cite{DEL09}
showed that any straight-line RAC drawing with $n$ nodes has at most
$4n-10$ edges. Angelini et al.\ \cite{ACBDFKS09} showed that there
are acyclic planar digraphs not admitting straight-line upward
RAC drawings and that the corresponding decision problem is
$\NP$-hard. They also constructed digraphs whose straight-line
upward RAC drawings require exponential area. Di Giacomo et al.\
\cite{DGDLM10} presented tradeoffs between the crossing resolution,
the maximum number of bends per edges and the area. Dujmovic  et
al.\ \cite{DGMW10} studied \emph{$\alpha$~Angle Crossing} (or
\emph{$\alpha$AC} for short) graphs, that generalize the RAC graphs.
A graph is called $\alpha$AC if it admits a polyline drawing
in which the smallest angle formed by an edge crossing is at least $\alpha$. For this class of
graphs, they proved upper and lower bounds for the number of edges.

Force-directed methods are commonly used for drawing graphs
\cite{Ea84,FR91}. In such a framework, a graph is treated as a
physical system with forces acting on it. Then, a good configuration
or drawing can be obtained from an equilibrium state of the system.
An overview of force-directed methods and their variations can be
found in the graph drawing books \cite{DBTT94,KW01}.


\section{Preliminaries and Notation}
\label{sec:preliminaries}

Let $G=(V,E)$ be an undirected graph. Given a drawing $\Gamma(G)$ of
$G$, we denote by $p_u=(x_u,y_u)$ the position of node $u \in V$ on
the plane. The unit length vector from $p_u$ to $p_v$ is denoted, by
$\overrightarrow{p_up_v}$, where $u,v\in V$. The degree of node
$u\in V$ is denoted by $d(u)$. Let also $d(G)=\max_{u\in V}d(u)$ be
the degree of the graph.

Given a pair of points $q_1, q_2 \in \mathbb{R}^2$, with a slight
abuse of notation, we denote by $||q_1-q_2||$ the Euclidean distance
between $q_1$ and $q_2$. We refer to the line segment defined by
$q_1$ and $q_2$ as $\overline{q_1q_2}$.

Let $\overrightarrow{\alpha}$ and $\overrightarrow{\gamma}$ be two
vectors. The vector which bisects the angle between
$\overrightarrow{\alpha}$ and $\overrightarrow{\gamma}$ is
$\frac{\overrightarrow{\alpha}}{||\overrightarrow{\alpha}||}+\frac{\overrightarrow{\gamma}}{||\overrightarrow{\gamma}||}$.
We denote by
$\mathtt{Bsc}(\overrightarrow{\alpha},\overrightarrow{\gamma})$ the
corresponding unit length vector. Given a vector
$\overrightarrow{\beta}$, we refer to the unit length vector which
is perpendicular to $\overrightarrow{\beta}$ and precedes it in the
clockwise direction, as $\mathtt{Perp}(\overrightarrow{\beta})$.
Some of our proofs use the following elementary geometric
properties:


\begin{center}
\begin{minipage}[b]{.50\textwidth}
\centering
\begin{equation}
\label{eq:tan-minus} \tan{(\omega_1-\omega_2)}
=\frac{\tan{\omega_1}- \tan{\omega_2}}{1+\tan{\omega_1} \cdot
\tan{\omega_2}}
\end{equation}
\end{minipage}
\begin{minipage}[b]{.48\textwidth}
\centering
\begin{equation}
\label{eq:tan-divide} \tan{(\omega/2)}
=\frac{\sin{\omega}}{1+\cos{\omega}}
\end{equation}
\end{minipage}
\begin{minipage}[b]{\textwidth}
~
\end{minipage}
\begin{minipage}[b]{.50\textwidth}
\begin{equation}
\label{eq:tan-inequality} \omega\in(0,\frac{\pi}{2}) \Rightarrow \tan{\omega} >\omega
\end{equation}
\end{minipage}
\begin{minipage}[b]{.48\textwidth}
~
\end{minipage}
\end{center}

\section{Drawings with Optimal Total Resolution for Complete and Complete Bipartite Graphs}
\label{sec:complete-graphs}

In this section, we define the total resolution of a drawing and we
present drawings of asymptotically optimal total resolution for
complete graphs (circular drawings) and complete bipartite graphs
($2$-layered drawings).

\begin{definition}
The \emph{total resolution} of a drawing is defined as the minimum
of its angular and crossing resolution.
\end{definition}
%

We first consider the case of complete graphs. Let $K_{n}=(V,E)$ be
a complete graph, where $V=\{u_0, u_1, \ldots, u_{n-1}\}$ and
$E=V\times V$. Our aim is to construct a circular drawing of $K_n$
of maximum total resolution. Our approach is constructive and common
when dealing with complete graphs. A similar one has been given by
Formann et al. \ \cite{FHHKLSWW93} for obtaining optimal drawings of
complete graphs, in terms of angular resolution. Consider a circle
$\mathcal{C}$ of radius $r_c>0$ centered at $(0,0)$ and circumscribe
a regular $n$-polygon $\mathcal{Q}$ on $\mathcal{C}$. In our
construction, the nodes of $K_n$ coincide with the vertices of
$\mathcal{Q}$. W.l.o.g., we further assume that $u_1, u_2, \ldots,
u_n$ appear in this order in the counter-clockwise direction around
$(0,0)$, as illustrated in Fig.\ref{fig:complete-case}.

\begin{theorem}
A complete graph $K_n$ admits a drawing of
total resolution $\Theta (\frac{1}{n})$.
\label{thm:complete-total-resolution}
\end{theorem}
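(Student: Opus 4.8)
The plan is to exhibit an explicit circular drawing of $K_n$ and show that both its angular and crossing resolution are $\Theta(1/n)$, which by the definition of total resolution as the minimum of the two immediately yields the bound. For the upper bound (that no drawing can do better than $\Theta(1/n)$), I would observe that any node $u$ in any drawing of $K_n$ has degree $n-1$, so the $n-1$ edges incident to $u$ split the full angle $2\pi$ around $u$ into $n-1$ wedges, forcing the smallest such wedge to be at most $\frac{2\pi}{n-1}$. Hence the angular resolution of \emph{every} drawing of $K_n$ is $O(1/n)$, and since total resolution is at most angular resolution, the total resolution of every drawing is $O(1/n)$. This gives the ``$\Theta$'' upper direction essentially for free and independently of the construction.

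For the lower bound I would analyze the regular-polygon drawing $\mathcal{Q}$ on the circle $\mathcal{C}$ described just before the theorem. First I would compute the angular resolution. At each node $u_i$, the incident edges go to all other vertices of the regular $n$-gon; by the inscribed-angle theorem, the angle subtended at $u_i$ by an edge between two consecutive polygon vertices is exactly $\frac{\pi}{n}$ (half the central angle $\frac{2\pi}{n}$), and the $n-1$ edges leaving $u_i$ are spread so that consecutive ones differ by exactly this inscribed angle. Thus the angular resolution of the regular-polygon drawing is exactly $\frac{\pi}{n} = \Theta(1/n)$.

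The harder and more delicate part is the crossing resolution: I must show that every pair of crossing edges meets at an angle that is bounded below by something of order $1/n$. Two chords of the circle cross iff their four endpoints interleave around the circle, and the angle between two crossing chords equals half the sum (or difference) of the two circular arcs they cut off, each arc being an integer multiple of the basic arc $\frac{2\pi}{n}$. The minimum crossing angle is therefore attained by the ``flattest'' interleaving configuration, where the four endpoints are as clustered as possible; concretely I expect the worst case to come from two chords whose endpoints are $\{u_i,u_{i+1}\}$-type near-adjacent pairs, giving a crossing angle on the order of $\frac{\pi}{n}$. I would parametrize the crossing angle as a function of the arc-lengths cut off by the two chords and minimize over all interleaving quadruples, using the inscribed-angle relation to reduce the crossing angle to a difference of inscribed angles, and then bound the minimum from below by $\Omega(1/n)$. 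The elementary identity \eqref{eq:tan-minus} and the inequality \eqref{eq:tan-inequality} should be the right tools to turn the tangent of this minimal crossing angle into the clean asymptotic lower bound.

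I expect the main obstacle to be precisely this crossing-resolution minimization: unlike the angular resolution, which is uniform across all nodes by symmetry, the crossing angles take many distinct values depending on how the endpoints of the two chords are distributed around the circle, and I must verify that the infimum over all crossing pairs is genuinely $\Omega(1/n)$ rather than something that decays faster. The key insight to make this tractable is the inscribed-angle reduction: every relevant angle becomes an integer multiple of $\frac{\pi}{n}$, so the continuous optimization collapses to a finite combinatorial comparison over interleaving patterns, after which \eqref{eq:tan-inequality} closes the gap. Combining the uniform angular resolution $\frac{\pi}{n}$ with the crossing-resolution lower bound $\Omega(1/n)$ gives a drawing of total resolution $\Omega(1/n)$, which together with the degree-based upper bound $O(1/n)$ establishes the claimed $\Theta(1/n)$.
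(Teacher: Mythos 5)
Your proposal is correct and follows essentially the same route as the paper: place the vertices on a regular $n$-gon and use the inscribed-angle relation, which gives angular resolution exactly $\frac{\pi}{n}$ and, since every crossing angle is half the sum of two intercepted arcs each of length at least $\frac{2\pi}{n}$, crossing resolution $\frac{2\pi}{n}$ (the paper derives the same bound via the exterior-angle theorem, which is the same fact; your worry about the minimization is unfounded, as the arc decomposition makes it immediate, and the tangent identities \eqref{eq:tan-minus} and \eqref{eq:tan-inequality} are not needed here). The only difference is that you explicitly prove the degree-based $O(\frac{1}{n})$ upper bound on every drawing, which the paper leaves implicit; that is a small but welcome addition for the $\Theta$ claim.
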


\begin{proof}
We prove that the angular resolution of the presented drawing of
$K_n$ is $\frac{\pi}{n}$, whereas its crossing resolution is
$\frac{2\pi}{n}$. First, observe that the arc of circle
$\mathcal{C}$ that connects two consecutive nodes $u_i$ and
$u_{(i+1)modn}$ is equal to $\frac{2\pi}{n}$, for each
$i=0,1,\dots,n-1$. Therefore, the angular resolution of the drawing
is $\frac{\pi}{n}$, as desired. Let now $e_i=(u_i,u_{i'})$ and
$e_j=(u_j,u_{j'})$ be two crossing edges. Without loss of
generality, we assume that $i<j<i'<j'$, as in
Fig.\ref{fig:complete-case}. The crossing of $e_i$ and $e_j$ defines
two angles $\phi_c$ and $\phi_c'$ such that $\phi_c+\phi_c'=\pi$. In
Fig.\ref{fig:complete-case}, $\phi_c$ is exterior to the triangle
formed by the crossing of $e_i$ and $e_j$ and the nodes $u_j$ and
$u_{i'}$ (refer to the dark-gray triangle of
Fig.\ref{fig:complete-case}). Therefore: $\phi_c =
(j'-i')\frac{\pi}{n} + (j-i)\frac{\pi}{n}$. Similarly,
$\phi_c'=(i'-j)\frac{\pi}{n}+(n-(j'-i))\frac{\pi}{n}$. In the case,
where $j=(i+1)\mod{n}$ and $j'=(i'+1)\mod{n}$ (i.e., the nodes $u_i$
($u_{i'}$, resp.) and $u_j$ ($u_{j'}$, resp.) are consecutive), the
angle $\phi_c$ receives its minimum value, which is equal to
$\frac{2\pi}{n}$. Similarly, we can prove that the minimum value of
$\phi_c'$ is also $\frac{2\pi}{n}$. This establishes that the
crossing resolution is $\frac{2\pi}{n}$.\qed
\end{proof}



We now proceed to consider the class of complete bipartite graphs.
Since an $n$-vertex complete bipartite graph is a subgraph of a
$n$-vertex complete graph, the bound of the total resolution of a
complete bipartite graph can be implied by the bound of the complete
graph. However, if the nodes of the graph must have integer
coordinates, i.e., we restrict ourselves on grid drawings, few
results are known regarding the area needed of such a drawing. An
upper bound of $O(n^3)$ area can be implied by \cite{BT04}. This
motivates us to separately study the class of complete bipartite
graph, since we can drastically improve this bound. Note that the
tradeoff between resolution and area has been studied by the graph
drawing community, in the past. Malitz and Papakostas \cite{MP92}
showed there exist graphs that always require exponential area for
straight-line embeddings maintaining good angular resolution. The
claim remains true, if circular arc edges are used instead of
straight-line \cite{CDGK99}. More recently, Angelini et al.\
\cite{ACBDFKS09} constructively showed that there exists graphs
whose straight-line upward RAC drawings require exponential area.


Again, we follow a constructive approach. First, we consider a
square $\mathcal{R}=AB\Gamma\Delta$ where its top and bottom sides
coincide with $\mathcal{L}_1$ and $\mathcal{L}_2$, respectively (see
Fig.\ref{fig:bipartite-case}). Let $H$ be the height (and width) of
$\mathcal{R}$. According to our approach, the nodes of $V_1$ ($V_2$,
resp.) reside along side $\Gamma \Delta$ ($AB$, resp.) of
$\mathcal{R}$. In order to specify the exact positions of the nodes
$u^1_1, u^1_2, \ldots, u^1_m$ along side $\Gamma\Delta$, we first
construct a bundle of $m$ semi-lines, say $\ell_1, \ldots, \ell_m$,
each of which emanates from vertex $B$ and crosses side
$\Gamma\Delta$ of $\mathcal{R}$, so that the angle formed by
$B\Gamma$ and semi-line $\ell_i$ equals to $\frac{(i-1)\cdot
\widehat{\Delta B \Gamma}}{m-1}$, for each $i=1, \ldots, m$. These
semi-lines split angle $\widehat{\Delta B \Gamma}$ into $m-1$
angles, each of which is equal to $\frac{\pi}{4 \cdot (m-1)}$, since
$\widehat{\Delta B \Gamma}=\pi/4$. Say $\phi =\frac{\pi}{4 \cdot
(m-1)}$. Then, we place node $u^1_i$ at the intersection of
semi-line $l_i$ and $\Gamma\Delta$, for each $i=1, \ldots, m$ (see
Fig.\ref{fig:bipartite-case}). In order to simplify the description
of our approach, we denote by $a_i$ the horizontal distance between
two consecutive nodes $u^1_i$ and $u^1_{i+1}$, $i = 1, \ldots, m-1$.

We proceed by defining an additional bundle of $m$ semi-lines, say
$\ell_1', \ldots, \ell_m'$, that emanate from vertex $A$. More
precisely, semi-line $l_{i}'$ emanates from vertex $A$ and passes
through the intersection of $l_{m-i}$ and $\Gamma\Delta$ (i.e., node
$u^1_{m-i}$), for each $i=1, \ldots, m$ (see
Fig.\ref{fig:bipartite-case}). Let $\phi_i'$ be the angle formed by
two consecutive semi-lines $l_i'$ and $l_{i+1}'$, for each
$i=1,\ldots,m-1$.

\begin{figure}[t!hb]
  \centering
  \begin{minipage}[b]{.4\textwidth}
    \centering
    \subfloat[\label{fig:complete-case}{A circular drawing of $K_n$.}]
    {\includegraphics[width=.90\textwidth]{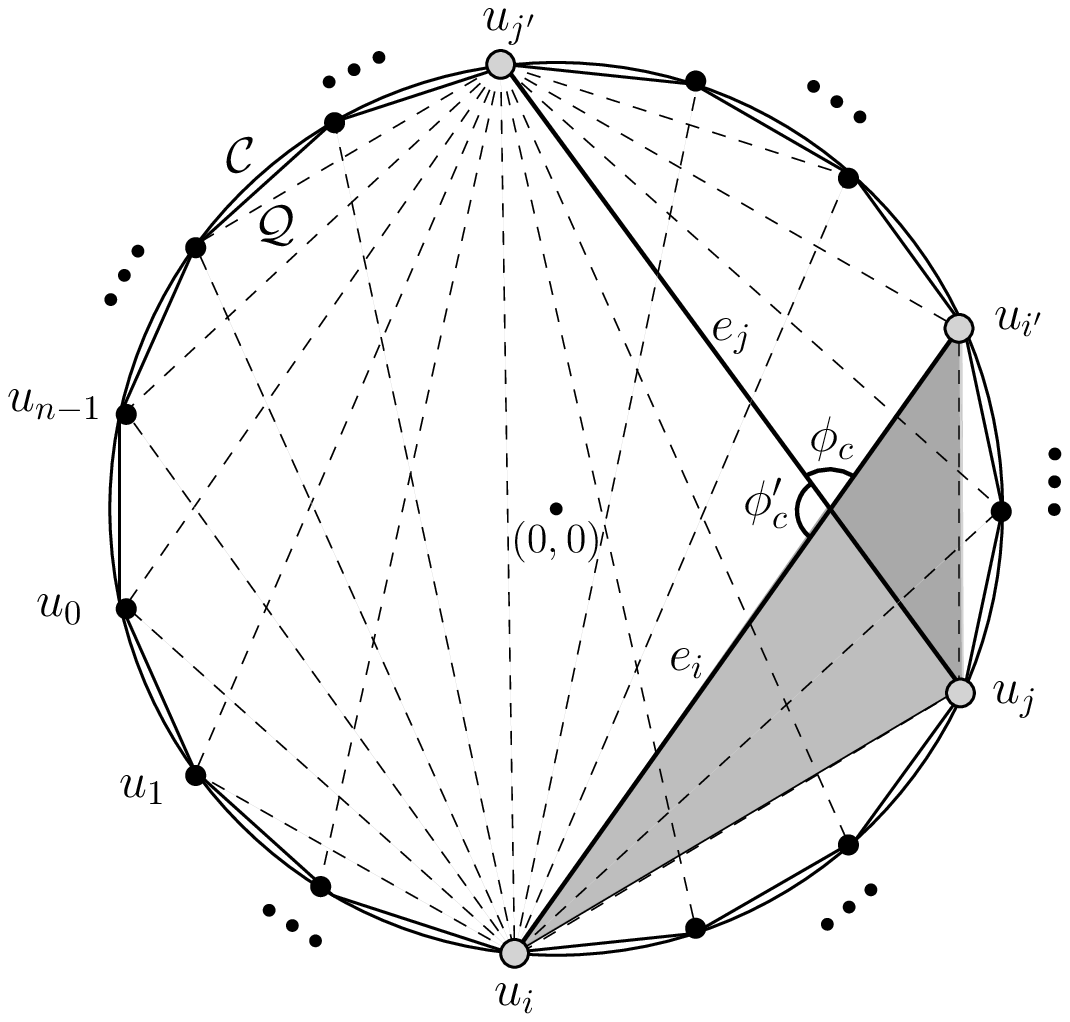}}
  \end{minipage}
  \begin{minipage}[b]{.57\textwidth}
    \centering
    \subfloat[\label{fig:bipartite-case}{A $2$-layered drawing of $K_{m,n}$.}]
    {\includegraphics[width=.90\textwidth]{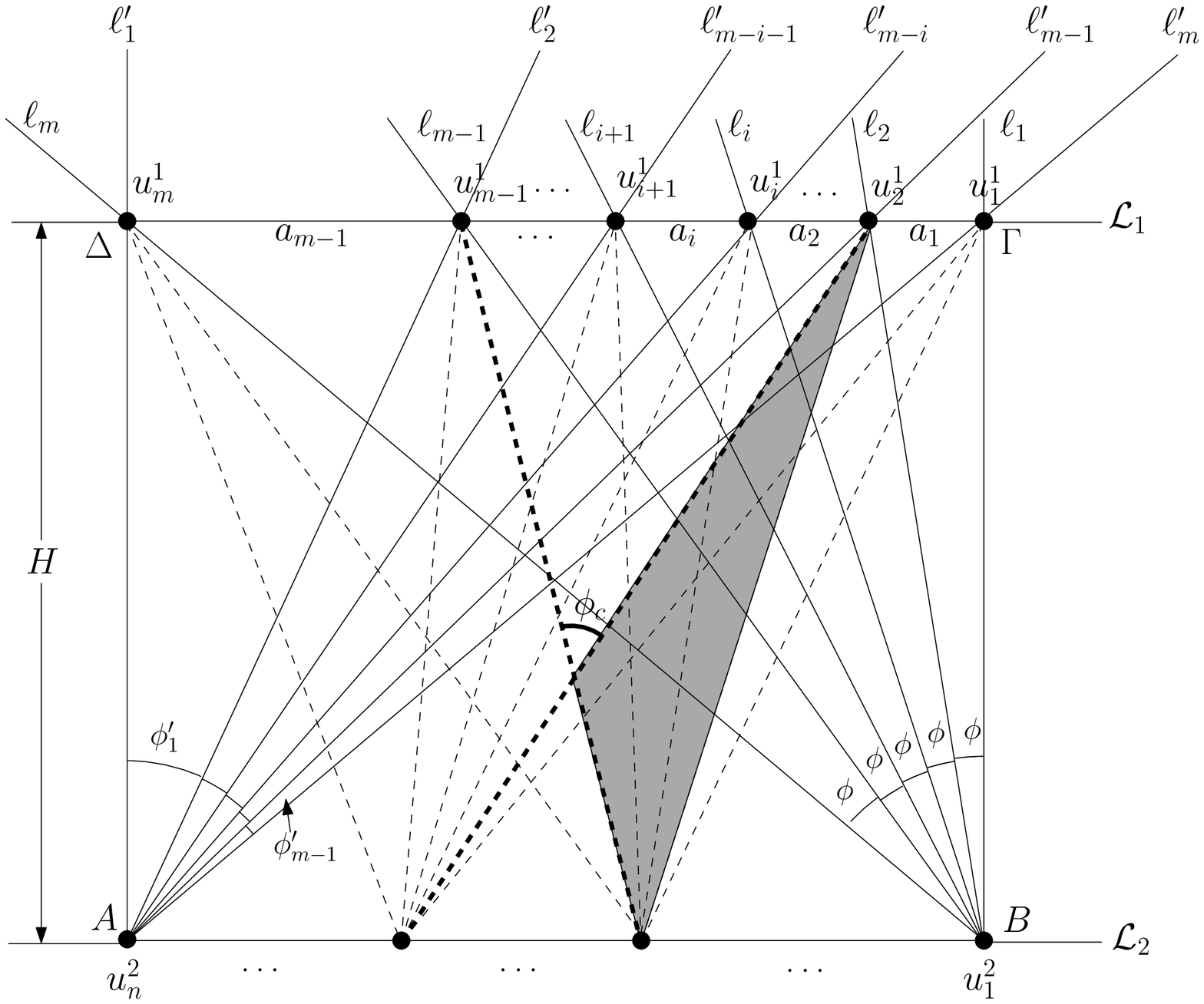}}
  \end{minipage}
  \caption{Illustrations of our constructions}
  \label{fig:construction-ilustrations}
\end{figure}

So far, we have managed to fix the position of the nodes of $V_1$
only (along side $\Gamma\Delta$ of $\mathcal{R}$). Symmetrically, we
define the position of the nodes of $V_2$ along side $AB$ of
$\mathcal{R}$. This only involves two additional bundles of
semi-lines emanating from vertices $\Gamma$ and $\Delta$. We now
proceed to investigate some geometric properties of the proposed
construction.

\begin{lemma}
For each $i=1,2,\ldots, m-1$, it holds that $a_{i-1} < a_i$.
\label{lem:line-segment-comparison}
\end{lemma}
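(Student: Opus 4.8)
The plan is to compute the horizontal gaps $a_i$ in closed form and then show they form a strictly increasing sequence by invoking the tangent subtraction identity~(\ref{eq:tan-minus}). First I would exploit the geometry of the construction. Since $B\Gamma$ is perpendicular to $\Gamma\Delta$ and $\widehat{\Delta B\Gamma}=\pi/4$, and since the bundle splits this angle into $m-1$ equal pieces of size $\phi=\frac{\pi}{4(m-1)}$, the semi-line $\ell_i$ leaves $B$ making angle $(i-1)\phi$ with $B\Gamma$. Dropping the vertical height $H$ from $B$ to $\Gamma$, the node $u^1_i$ where $\ell_i$ meets $\Gamma\Delta$ therefore lies at distance $H\tan((i-1)\phi)$ from $\Gamma$ along $\Gamma\Delta$. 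Consequently the gap between two consecutive nodes is $a_i=H\bigl(\tan(i\phi)-\tan((i-1)\phi)\bigr)$, which is positive because $\tan$ is increasing on the range in question.

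Next I would rewrite this difference of tangents using~(\ref{eq:tan-minus}) with $\omega_1=i\phi$ and $\omega_2=(i-1)\phi$, noting $\omega_1-\omega_2=\phi$. Solving the identity for the numerator gives
\begin{equation*}
a_i=H\tan\phi\,\bigl(1+\tan((i-1)\phi)\tan(i\phi)\bigr).
\end{equation*}
Since $H\tan\phi$ is a positive constant independent of $i$, the monotonicity of $a_i$ is reduced to the monotonicity of the product $\tan((i-1)\phi)\tan(i\phi)$. It then suffices to observe that every angle $k\phi$ with $0\le k\le m-1$ satisfies $k\phi\le(m-1)\phi=\pi/4<\pi/2$, so $\tan$ is nonnegative and strictly increasing over the entire relevant range. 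Hence both factors $\tan((i-1)\phi)$ and $\tan(i\phi)$ grow with $i$ while staying nonnegative, so their product grows, and therefore $a_{i-1}<a_i$.

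The computation is routine once~(\ref{eq:tan-minus}) is in hand, so I do not expect a genuine obstacle; the points needing care are geometric and index book-keeping rather than analytic. I would make sure to justify precisely that $\ell_i$ makes angle $(i-1)\phi$ with the vertical $B\Gamma$ (this is exactly the equal splitting of $\widehat{\Delta B\Gamma}$) and that the foot of the perpendicular from $B$ is $\Gamma$, since the formula for $a_i$ rests on these. The one analytic subtlety is confirming that all angles stay strictly below $\pi/2$; without this the sign and monotonicity arguments would break. Finally I would flag the boundary index: the comparison $a_{i-1}<a_i$ tacitly uses $\tan((i-1)\phi)>0$, which requires $i\ge 2$, so the statement is really that the sequence $a_1,a_2,\ldots,a_{m-1}$ is strictly increasing.
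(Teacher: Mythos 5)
Your proposal is correct and is essentially the paper's own argument: both reduce to the closed form $a_i = H\tan\phi\,\bigl(1+\tan((i-1)\phi)\tan(i\phi)\bigr)$ obtained via Equation~(\ref{eq:tan-minus}), and then compare consecutive terms using the fact that the tangent is increasing and nonnegative on $[0,\pi/4]$. The paper merely wraps this computation in an induction whose hypothesis is never actually invoked, so your direct derivation (including the boundary-index caveat about $i\geq 2$) is the same proof in slightly cleaner form.
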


\begin{proof}
By induction. For the base of the induction, we have to show that
$a_1 < a_2$. First observe that $a_1= H\tan{\phi}$ and $a_1 + a_2 =
H \tan{2\phi}$. Therefore:

\begin{center}
$a_2 = H (\tan{2\phi} - \tan{\phi}) $ \nplus{\ref{eq:tan-minus}} $
a_1 \cdot (1+\tan{2\phi} \cdot \tan{\phi})$
\end{center}

However, both $\tan{\phi}$ and $\tan{2\phi}$ are greater than zero,
which immediately implies that $a_1<a_2$. For the induction
hypothesis, we assume that $\forall$ $k$, $k<m-1$ it holds that
$a_{k-1} < a_k$ and we should prove that $a_k < a_{k+1}$. Obviously,
$a_1 + \ldots + a_k = H \tan{k\phi}$. Based on
Equation~\ref{eq:tan-minus} and similarly to the base of the
induction, we have:

\begin{itemize}
  \item $a_{k+1} = H \tan{\phi} \cdot(1+\tan{(k+1)\phi} \cdot \tan{k \phi})$
  \item $a_k ~~~= H \tan{\phi} \cdot (1+\tan{(k-1)\phi} \cdot \tan{k\phi} )$
\end{itemize}

In order to complete the proof, observe that $(k+1)\phi >
(k-1)\phi$.\qed
\end{proof}

\begin{lemma}
For each $i=1,2,\ldots, m-1$, it holds that $\phi_{i-1}' > \phi_i'$.
\label{lem:angle-comparison}
\end{lemma}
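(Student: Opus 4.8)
The plan is to reduce the claim about the decreasing angles $\phi_i'$ to the monotonicity of the horizontal gaps $a_i$ already established in Lemma~\ref{lem:line-segment-comparison}. First I would record the coordinates of the relevant points. Placing $\Delta$ at the origin with $\Gamma=(H,0)$, $A=(0,H)$ and $B=(H,H)$, the definition of the first bundle yields that node $u^1_j$ lies on $\Gamma\Delta$ at $(H-H\tan((j-1)\phi),\,0)$; in particular $u^1_1=\Gamma$, $u^1_m=\Delta$, and $a_j=H(\tan(j\phi)-\tan((j-1)\phi))$ is exactly the horizontal distance between $u^1_j$ and $u^1_{j+1}$.

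The key observation is that each $\phi_i'$ is simply the angle subtended at the apex $A$ by the segment joining two consecutive nodes of $V_1$, since the consecutive semi-lines $\ell_i'$ and $\ell_{i+1}'$ pass through consecutive nodes. Writing $\psi_j$ for the angle between the downward vertical $A\Delta$ and the ray $\overline{Au^1_j}$, one has $\tan\psi_j = (H-H\tan((j-1)\phi))/H = 1-\tan((j-1)\phi)$, and $\phi_i'$ is the difference of two such consecutive angles. Applying the tangent–subtraction identity~(\ref{eq:tan-minus}) to this difference, the numerator collapses to the horizontal gap $a_j/H$, while the denominator becomes $1+(1-\tan((j-1)\phi))(1-\tan(j\phi))$, where the running index $j$ moves in the direction opposite to $i$ (as $i$ grows, $\ell_i'$ meets the nodes from $\Delta$ toward $\Gamma$).

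From this closed form the monotonicity is immediate. As $i$ decreases — equivalently, as the running index $j$ increases toward $\Delta$ — the numerator $a_j$ strictly increases by Lemma~\ref{lem:line-segment-comparison}, whereas every factor $1-\tan((j-1)\phi)$ lies in $[0,1]$ and strictly decreases, because $\tan$ is increasing on $(0,\tfrac{\pi}{4}]$ and $(m-1)\phi=\tfrac{\pi}{4}$. Hence the denominator strictly decreases while remaining positive (it equals $1$ plus a nonnegative product, the factor vanishing only at $u^1_m=\Delta$), so $\tan\phi_i'$ — and therefore $\phi_i'$ itself, all these angles being acute — strictly increases with $j$. Reading this back in the original orientation gives $\phi_{i-1}'>\phi_i'$ for every $i$.

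I expect the only real subtlety to be the bookkeeping of indices and orientation: one must track that $A$ sits directly above $\Delta$, so the nodes nearest $A$ (largest index, near $\Delta$) are simultaneously the closest and the most widely spaced, which is precisely why their subtended angles are the largest. Once the tangent–subtraction step isolates $a_j$ in the numerator, the remaining argument is just ``larger numerator over smaller positive denominator'' and leans entirely on Lemma~\ref{lem:line-segment-comparison}; the single point deserving a line of care is checking that the two denominator factors stay nonnegative throughout the admissible range of $j$.
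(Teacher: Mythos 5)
Your proof is correct and takes essentially the same route as the paper: your closed form $\tan\phi_i' = \frac{a_j/H}{1+(1-\tan((j-1)\phi))(1-\tan(j\phi))}$ is precisely the paper's expression $\tan\phi_k' = \frac{H a_{m-k}}{H^2+(a_{m-1}+\cdots+a_{m-k})(a_{m-1}+\cdots+a_{m-k+1})}$ rewritten in coordinates, since $1-\tan((j-1)\phi)=(a_j+\cdots+a_{m-1})/H$. Both arguments then finish identically—larger numerator by Lemma~\ref{lem:line-segment-comparison}, smaller positive denominator, hence larger tangent of an acute angle—with the only cosmetic difference being that the paper dresses this direct comparison up as an induction.
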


\begin{proof}
By induction. For the base of the induction, we have to prove that
$\phi_1' > \phi_2'$ or equivalently that $\tan{\phi_1'} >
\tan{\phi_2'}$. It holds that $\tan{\phi_1'}=a_{m-1}/H$ and
$\tan{(\phi_1' + \phi_2')} = (a_{m-1} + a_{m-2})/H$. By combining
these relationships with Equation~\ref{eq:tan-minus} we have that
$\tan{\phi_2'} = \frac{Ha_{m-1}}{H^2+a^2_{m-1} + a_{m-1}a_{m-2}}$.
Therefore:

$$\tan\phi_1' > \tan\phi_2' \Leftrightarrow H^2(a_{m-1}-a_{m-2})+a^3_{m-1} + a^2_{m-1}a_{m-2}
> 0,$$

\noindent which trivially holds due to
Lemma~\ref{lem:line-segment-comparison}. For the induction
hypothesis, we assume that $\forall$ $k$, $k<m-1$ it holds that
$\phi_{k-1}' > \phi_{k}'$ and we have to show that $\phi_{k}'
> \phi_{k+1}'$. Observe that:

\begin{itemize}
  \item $\tan\phi_{k}' = \frac{\tan(\phi_1+ \ldots + \phi_{k}') -
\tan(\phi_1'+ \ldots + \phi_{k-1}')}{1+\tan(\phi_1'+ \ldots +
\phi_{k}') \cdot \tan(\phi_1'+ \ldots + \phi_{k-1}')}$ = $\frac{H
a_{m-k}}{H^2+(a_{m-1} + \ldots + a_{m-k})(a_{m-1} + \ldots +
a_{m-k+1})}$
  \item $\tan(\phi_{k+1}')
= \dots = \frac{H a_{m-(k+1)}}{H^2+(a_{m-1} + \ldots +
a_{m-(k+1)})(a_{m-1} + \ldots + a_{m-k})}$
\end{itemize}

\noindent By Lemma~\ref{lem:line-segment-comparison} we have that $
(a_{m-1} + \ldots + a_{m-(k-1)}) > (a_{m-1} + \ldots + a_{m-(k+1)})$
and $ H \cdot a_{m-k} > H \cdot a_{m-(k+1)}$. Therefore,
$\tan\phi_{k}' > \tan\phi_{k+1}'$.\qed
\end{proof}

\begin{lemma}
Angle $\phi_{m-1}'$ is the smallest angle among all the angles
formed in the drawing. \label{lem:smallest-angle}
\end{lemma}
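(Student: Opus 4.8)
The plan is to split the angles that occur in the drawing into two families---the \emph{angular-resolution} angles formed at the nodes and the \emph{crossing-resolution} angles formed at edge crossings---and to show that $\phi'_{m-1}$ is a lower bound for every angle in both families while being attained at vertex $A$. I would use throughout that every angle at a node of $V_2$ is subtended by two consecutive nodes of $V_1$ (and symmetrically), that the placement of $V_2$ along $AB$ is the image of the placement of $V_1$ along $\Gamma\Delta$ under the central symmetry of $\mathcal{R}$, and the natural convention $m\ge n$.

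For the angular-resolution family, fix two consecutive nodes $u^1_j,u^1_{j+1}$ of $V_1$ and consider the angle $\angle u^1_j\,v\,u^1_{j+1}$ as the viewpoint $v$ ranges over side $AB$. Because the locus of points from which a fixed segment is seen under a fixed angle is a circular arc through its endpoints, this angle is a unimodal function of the position of $v$, so its minimum over $AB$ is attained at $A$ or at $B$. At $B$ every such angle equals $\phi=\frac{\pi}{4(m-1)}$ by construction, while at $A$ it is one of the angles $\phi'_i$ and hence, by Lemma~\ref{lem:angle-comparison}, at least $\phi'_{m-1}$; taking the pair nearest to $\Gamma$ (seen from $A$ under exactly $\phi'_{m-1}$ but from the much closer vertex $B$ under the larger angle $\phi$) gives $\phi'_{m-1}<\phi$. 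Thus every angle at a node of $V_2$ is at least $\phi'_{m-1}$, and the central-symmetry argument bounds every angle at a node of $V_1$ below by the analogous smallest angle determined by the $n$ nodes of $V_2$, which is itself at least $\phi'_{m-1}$ since this quantity decreases in the number of opposite nodes and $m\ge n$.

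For a crossing of $e_1=u^1_a u^2_b$ and $e_2=u^1_c u^2_d$, with $u^1_a$ left of $u^1_c$ and $u^2_d$ left of $u^2_b$, I would introduce the auxiliary edge $u^1_a u^2_d$, which exists because the graph is complete bipartite. All edges make a slope angle in $[\frac{\pi}{4},\frac{3\pi}{4}]$, so the crossing angle is acute and equals $\theta_2-\theta_1$, where $\theta_1<\theta^\ast<\theta_2$ are the slope angles of $e_1$, of $u^1_a u^2_d$, and of $e_2$. Writing $\theta_2-\theta_1=(\theta_2-\theta^\ast)+(\theta^\ast-\theta_1)$ decomposes the crossing angle into the angle at the $V_2$ node $u^2_d$ (between its edges to $u^1_a$ and $u^1_c$) plus the angle at the $V_1$ node $u^1_a$ (between its edges to $u^2_b$ and $u^2_d$); both are angular-resolution angles, each at least $\phi'_{m-1}$ by the previous paragraph, so the crossing angle strictly exceeds $\phi'_{m-1}$.

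Given this decomposition the crossing bound is immediate, so the main obstacle is the angular-resolution part: establishing the unimodality of the subtended angle along $AB$ (equivalently, that a fixed segment is seen under the smallest angle from one of the two extreme vertices) and carrying out the two-sided comparison under $m\ge n$---that is, verifying that the smallest corner angle is monotone decreasing in the number of nodes on the opposite side. The remaining estimates reduce to Lemmas~\ref{lem:line-segment-comparison} and~\ref{lem:angle-comparison} together with elementary trigonometry.
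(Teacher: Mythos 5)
Your proposal is correct, and it shares the paper's two-part skeleton: first bound every node (angular-resolution) angle from below by $\phi_{m-1}'$, then show every crossing angle dominates node angles. Your crossing argument is, in substance, identical to the paper's: the paper observes that each acute crossing angle is exterior to a triangle having one vertex in $V_1$ and one in $V_2$, hence equals the sum of the two remote interior angles; your slope-angle identity $\theta_2-\theta_1=(\theta_2-\theta^\ast)+(\theta^\ast-\theta_1)$ via the auxiliary edge $u^1_au^2_d$ is exactly that exterior-angle decomposition written analytically. Where you genuinely differ is the angular part. The paper invokes Lemma~\ref{lem:angle-comparison} at the corner $A$ and then asserts without proof (``it is not difficult to see'') that moving the apex of the bundle from the corner to any internal node of $AB$ only enlarges the angles; you instead fix a consecutive pair of $V_1$ nodes and use the inscribed-angle (Regiomontanus-type) unimodality of the subtended angle along $AB$, so its minimum over the side is attained at a corner, where it equals either $\phi$ (at $B$) or some $\phi_i'\geq\phi_{m-1}'$ (at $A$). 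This is a cleaner rigorization of precisely the step the paper waves through, at the cost of needing the extra comparison $\phi_{m-1}'\leq\phi$, which you justify only by ``much closer''; note that it actually follows from your own tools, since the subtended angle along $AB$ is unimodal and symmetric about the vertical axis through the midpoint of $\overline{u^1_1u^1_2}$, and $B$ lies nearer to that axis than $A$. Likewise, your monotonicity claim for the $m\geq n$ step (fewer opposite nodes gives a larger smallest corner angle) holds because reducing the number of nodes moves the neighbor of the corner node farther from it, so the corner angle grows; the paper's own treatment of this step is no more detailed than yours. In short: same strategy, identical crossing argument, and a different---arguably sounder---justification of the angular minimum over internal nodes.
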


\begin{proof}
From Lemma \ref{lem:angle-comparison}, it follows that angle
$\phi_{m-1}'$ is the smallest angle among all $\phi_i'$, $i=1,
\ldots, m-1$. Additionally, it is not difficult to see that angle
$\phi_{m-1}'$ is larger (but remains the smallest among all
$\phi_i'$, $i=1, \ldots, m-1$), if the endpoint of the bundle (i.e.,
node $u^2_n$) moves to any internal point (i.e., node $u^2_i$) of
side $AB$ (see Fig.\ref{fig:bipartite-case}). Therefore,
$\phi_{m-1}'$ is the smallest angle among all the angles formed by
pairs of consecutive edges incident to any node of $V_2$. Since $m
\geq n$, the same holds for the nodes of $V_1$. Therefore,
$\phi_{m-1}'$ defines the angular resolution of the drawing.

Consider now two crossing edges (refer to the bold, crossing
dashed-edges of Fig.\ref{fig:bipartite-case}). Their crossing
defines (a)~a pair of angles that are smaller than $90^o$ and
(b)~another pair of angles that are larger than $90^o$. Obviously,
only the acute angles participate in the computation of the crossing
resolution (see angle $\phi_c$ in Fig.\ref{fig:bipartite-case}).
However, in a complete bipartite graph the acute angles are always
exterior to a triangle having two of its vertices on $V_1$ and
$V_2$, respectively (refer to the gray-colored triangle of
Fig.\ref{fig:bipartite-case}). Therefore, the crossing resolution is
always greater than the angular resolution, as desired. \qed

\end{proof}

\begin{lemma}
It holds that $\phi_{m-1}' \geq \frac{\phi}{2}$.
\label{lem:smallest-angle-phi}
\end{lemma}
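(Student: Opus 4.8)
The plan is to reduce the claim to a single–variable trigonometric inequality by first extracting a clean closed form for $\tan\phi_{m-1}'$. The proof of Lemma~\ref{lem:angle-comparison} already supplies, for every $k$,
$$\tan\phi_k' = \frac{H\, a_{m-k}}{H^2 + (a_{m-1}+\cdots+a_{m-k})\,(a_{m-1}+\cdots+a_{m-k+1})}.$$
First I would specialize this to $k=m-1$. There $a_{m-k}=a_1$; the first nested sum $a_{m-1}+\cdots+a_1$ is the full horizontal span of side $\Gamma\Delta$, namely $H\tan((m-1)\phi)=H\tan(\pi/4)=H$; and the second nested sum is this same span with $a_1$ removed, i.e.\ $H-a_1$. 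Substituting collapses the unwieldy fraction to
$$\tan\phi_{m-1}' = \frac{H a_1}{H^2 + H(H-a_1)} = \frac{a_1}{2H - a_1}.$$

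Next I would use $a_1 = H\tan\phi$ (fixed when placing the nodes) to eliminate $H$, obtaining $\tan\phi_{m-1}' = \frac{\tan\phi}{2-\tan\phi} = \frac{\sin\phi}{2\cos\phi-\sin\phi}$. Since both $\phi_{m-1}'$ and $\phi/2$ lie in $(0,\frac{\pi}{2})$ and $\tan$ is increasing there, it suffices to prove $\tan\phi_{m-1}' \geq \tan(\phi/2)$. Rewriting $\tan(\phi/2)=\frac{\sin\phi}{1+\cos\phi}$ through Equation~\ref{eq:tan-divide}, the target reads
$$\frac{\sin\phi}{2\cos\phi-\sin\phi} \geq \frac{\sin\phi}{1+\cos\phi}.$$
Because $\phi=\frac{\pi}{4(m-1)}\leq\frac{\pi}{4}$ we have $\cos\phi\geq\sin\phi$, so $2\cos\phi-\sin\phi\geq\sin\phi>0$ and both denominators are positive; cancelling the positive factor $\sin\phi$ turns the inequality into $2\cos\phi-\sin\phi \leq 1+\cos\phi$, i.e.\ $\cos\phi \leq 1+\sin\phi$, which holds trivially since $\cos\phi\leq 1\leq 1+\sin\phi$ on $[0,\frac{\pi}{2}]$.

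I expect the only genuine obstacle to be the first step: noticing that the two nested partial sums degenerate to $H$ and $H-a_1$, which is what makes the general formula for $\tan\phi_k'$ collapse into the two-term expression $a_1/(2H-a_1)$. Once that closed form is in hand, the remainder is routine — the half-angle identity reduces everything to $\cos\phi\leq 1+\sin\phi$, with no dependence on $m$ and no appeal to the bound $\tan\omega>\omega$ of Equation~\ref{eq:tan-inequality}. It would remain only to record that the admissible range $\phi\leq\frac{\pi}{4}$ (equivalently $m\geq 2$) keeps the denominators positive, so that the cancellation preserves the direction of the inequality.
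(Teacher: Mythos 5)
Your proposal is correct and follows essentially the same route as the paper: both extract the closed form of $\tan\phi_{m-1}'$ from the proof of Lemma~\ref{lem:angle-comparison}, exploit that the partial sums collapse via $a_1+\cdots+a_{m-1}=H\tan\left(\frac{\pi}{4}\right)=H$, and invoke the half-angle identity of Equation~\ref{eq:tan-divide}. The only difference is cosmetic: the paper passes through the intermediate bound $\tan\frac{\phi}{2}<\frac{a_1}{2H}$ and reduces to $H\cdot a_1>0$, whereas you compare the exact expressions $\frac{\sin\phi}{2\cos\phi-\sin\phi}$ and $\frac{\sin\phi}{1+\cos\phi}$ and reduce to $\cos\phi\leq 1+\sin\phi$ --- the same ingredients arranged in a slightly tidier (and, incidentally, logically cleaner) chain.
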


\begin{proof}
We equivalently prove that $\tan\phi_{m-1}' > \tan\frac{\phi}{2} $.
Using Equation~\ref{eq:tan-divide}, we have that $\tan\frac{\phi}{2}
< \frac{a_1}{2 H}$. Therefore:

\begin{align*}
\tan\phi_{m-1}' > \tan\frac{\phi}{2} &\Longleftrightarrow
\frac{\frac{a_1}{H}}{1+ \frac{a_1 + \ldots + a_{m-1}}{H} \cdot
\frac{a_2 + \ldots + a_{m-1}}{H} } > \frac{a_1}{2 \cdot H}\\
&\Longleftrightarrow H^2 > (a_1 + \ldots + a_{m-1}) (a_2 + \ldots +
a_{m-1})\\ & \Longleftrightarrow H \cdot a_1 > 0
\end{align*}

\noindent which obviously holds.\qed

\end{proof}

\begin{theorem}
A complete bipartite graph $K_{m,n}$ admits a $2$-layered drawing of
total resolution $\Theta (\frac{1}{\max\{{m,n}\}})$.
\label{thm:total-resolution}
\end{theorem}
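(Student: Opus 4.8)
The plan is to collapse the total resolution of the construction to the single quantity $\phi'_{m-1}$ and then to sandwich it between two bounds of order $1/\max\{m,n\}$. First I would assume without loss of generality that $m\geq n$, so that $\max\{m,n\}=m$. By Lemma~\ref{lem:smallest-angle}, the angle $\phi'_{m-1}$ is the smallest angle occurring anywhere in the drawing and it equals the angular resolution; the same lemma shows that every acute angle entering the crossing resolution is exterior to a triangle with one vertex in each of $V_1$ and $V_2$, and is therefore strictly larger than the angular resolution. Hence the crossing resolution exceeds the angular resolution, and the total resolution---being the minimum of the two---coincides with $\phi'_{m-1}$.

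For the lower bound I would invoke Lemma~\ref{lem:smallest-angle-phi} directly: it gives $\phi'_{m-1}\geq \phi/2=\frac{\pi}{8(m-1)}$, which is $\Omega(1/m)=\Omega(1/\max\{m,n\})$.

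For the matching upper bound I would run an averaging argument on the bundle emanating from $A$. The $m-1$ angles $\phi'_1,\ldots,\phi'_{m-1}$ partition the angle that the segment $\Gamma\Delta$ subtends at $A$, and since $A$ is a corner of the square this total spread is at most $\widehat{\Delta A\Gamma}=\pi/4$, a constant independent of $m$. As $\phi'_{m-1}$ is the smallest of these $m-1$ angles by Lemma~\ref{lem:angle-comparison}, it cannot exceed their average, so $\phi'_{m-1}\leq \frac{\pi/4}{m-1}=\phi=O(1/m)=O(1/\max\{m,n\})$. Combining the two estimates yields $\phi'_{m-1}=\Theta(1/\max\{m,n\})$, as claimed.

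The step I expect to require the most care is the reduction showing that the total resolution is governed by the angular resolution rather than the crossing resolution, since everything downstream depends on it; this is precisely what Lemma~\ref{lem:smallest-angle} supplies. Once that is in place, the lower bound is immediate from Lemma~\ref{lem:smallest-angle-phi}, and the only remaining subtlety is the elementary geometric observation that the bundle from $A$ is contained in the fixed corner angle $\pi/4$, which is what makes the averaging upper bound go through and pins the resolution to the exact order $1/\max\{m,n\}$.
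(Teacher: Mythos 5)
Your proof is correct and takes essentially the same route as the paper, whose entire proof is the one-line citation of Lemmata~\ref{lem:smallest-angle} and~\ref{lem:smallest-angle-phi}: reduce the total resolution to the single angle $\phi_{m-1}'$ and lower-bound it by $\phi/2 = \Omega(1/\max\{m,n\})$. Your averaging argument for the matching $O(1/\max\{m,n\})$ upper bound (the $m-1$ angles at $A$ sum to at most $\pi/4$, and $\phi_{m-1}'$ is the smallest of them by Lemma~\ref{lem:angle-comparison}) is a correct way of supplying a step the paper leaves entirely implicit.
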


\begin{proof}
Immediately follows from Lemmata~\ref{lem:smallest-angle} and
~\ref{lem:smallest-angle-phi}.\qed
\end{proof}



Consider now the case where the nodes of the graph must have integer
coordinates, i.e., we restrict ourselves on grid drawings. An
interesting problem that arises in this case is the estimation of
the total area occupied by the produced drawing. We will describe
how we can modify the positions of the nodes produced by our
algorithm in order to obey the grid constraints. Assume without loss
of generality that $\mathcal{L}_1$ and $\mathcal{L}_2$ are two
horizontal lines, so that $\mathcal{L}_2$ coincides with $y$-axis
and the drawing produced by our algorithm has $a_1=1$. Then, we can
express the height of drawing $\Gamma(K_{m,n})$ as a function of
$\phi$, as follows:

$$a_1=1 \Longleftrightarrow \tan{\phi} \cdot H =1 \Longleftrightarrow H = 1/\tan{\phi}$$

Note that this drawing does not obey the grid constraints. To
achieve this, we move the horizontal line $\mathcal{L}_1$ to the
horizontal grid line immediately above it and each node of both
$V_1$ and $V_2$ to the rightmost grid-point to its left. In this
manner, we obtain a new drawing $\Gamma'(K_{m,n})$, which is grid as
desired. By Lemma~\ref{lem:line-segment-comparison}, it follows that
there are no two nodes sharing the same grid point, since $a_1$ is
slightly greater than one grid unit. Since neither horizontal line
$\mathcal{L}_1$ nor any node of $K_{m,n}$ moves more than one unit
of length, the total resolution of $\Gamma'(K_{m,n})$ is not
asymptotically affected, and, in addition the height of the drawing
is not significantly greater (i.e., asymptotically it remains the
same). Based on the above, the area is bounded by $\cot^2{\phi}$ or
equivalently by $1/\tan^2{\phi}$. By
Equation~\ref{eq:tan-inequality}, this is further bounded by
$1/\phi^2$. By Theorem~\ref{thm:total-resolution}, it holds that
$\phi = O(1/\max\{m,n\})$. Therefore, the total area occupied by the
drawing is $O(\max\{m^2,n^2\})$. The following theorem summarizes
this result.


\begin{theorem}
A complete bipartite graph $K_{m,n}$ admits a $2$-layered grid
drawing of $\Theta (\frac{1}{\max\{{m,n}\}})$ total resolution and
$O(\max\{m^2,n^2\})$ area.
\end{theorem}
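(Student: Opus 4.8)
The plan is to start from the non-grid drawing $\Gamma(K_{m,n})$ produced by our construction, which by Theorem~\ref{thm:total-resolution} already achieves total resolution $\Theta(1/\max\{m,n\})$, and to normalize it so that $a_1=1$. Since by Lemma~\ref{lem:line-segment-comparison} the horizontal gaps satisfy $a_1<a_2<\cdots<a_{m-1}$, fixing $a_1=1$ forces every gap to exceed one grid unit; this is the key fact that will let me round node coordinates without collisions. From $a_1=H\tan\phi=1$ I obtain $H=1/\tan\phi=\cot\phi$, so the bounding square $\mathcal{R}$ has side $\cot\phi$ and the (non-grid) drawing occupies area $\cot^2\phi$.

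Next I would make the drawing grid-conforming by a rounding step: move $\mathcal{L}_1$ up to the nearest horizontal grid line and snap each node of $V_1\cup V_2$ leftward to the nearest grid point. The crucial observation is that no point moves by more than one unit, and because every $a_i>1$ (indeed $a_1$ is \emph{slightly} larger than one unit, with all later gaps strictly larger) distinct nodes land on distinct grid points. I would then argue that a perturbation of each vertex by at most one unit, in a drawing whose relevant lengths are $\Theta(\cot\phi)=\Theta(\max\{m,n\})$, changes every angle by only a lower-order additive amount, so the angular resolution $\phi_{m-1}'\ge\phi/2$ (Lemma~\ref{lem:smallest-angle-phi}) and the crossing resolution (which by Lemma~\ref{lem:smallest-angle} dominates the angular one) both remain $\Theta(1/\max\{m,n\})$. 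The height likewise changes by $O(1)$ and so stays asymptotically $\cot\phi$.

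For the area bound I would chain the estimates: the grid area is $O(\cot^2\phi)=O(1/\tan^2\phi)$, and by the elementary inequality~(\ref{eq:tan-inequality}), $\tan\phi>\phi$, so $1/\tan^2\phi<1/\phi^2$. Finally, Theorem~\ref{thm:total-resolution} gives $\phi=O(1/\max\{m,n\})$, whence $1/\phi^2=O(\max\{m^2,n^2\})$, completing the area claim while the total resolution bound is inherited from the same theorem.

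The main obstacle I expect is the perturbation argument: I must verify rigorously that snapping each endpoint by at most one unit does not degrade the total resolution below $\Theta(1/\max\{m,n\})$. The delicate point is that the smallest angle $\phi_{m-1}'$ is itself only $\Theta(1/\max\{m,n\})$, so a naive bound on angular change could be of the same order as the quantity we are trying to protect. The resolution is that the \emph{lengths} of the incident edges near $\mathcal{L}_1$ are $\Theta(\cot\phi)=\Theta(\max\{m,n\})$, so an $O(1)$ displacement of an endpoint induces an angular change of order $O(1/\max\{m,n\})\cdot o(1)$ relative to the edge length, i.e.\ it perturbs $\tan\phi_{m-1}'$ only multiplicatively by a factor $1\pm o(1)$; making this ratio estimate precise, rather than hand-waving that ``nothing moves more than one unit,'' is where the real care lies.
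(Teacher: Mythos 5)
Your proposal matches the paper's own argument essentially step for step: normalize $a_1=1$ so that $H=1/\tan\phi$, snap $\mathcal{L}_1$ and the nodes to the grid (using Lemma~\ref{lem:line-segment-comparison} to avoid collisions), argue the unit perturbation leaves the total resolution and height asymptotically unchanged, and bound the area by $\cot^2\phi \le 1/\phi^2 = O(\max\{m^2,n^2\})$ via inequality~(\ref{eq:tan-inequality}). In fact, your closing discussion of how to make the perturbation step rigorous is more careful than the paper, which simply asserts that moving nodes by at most one unit does not asymptotically affect the resolution.
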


\section{A Force Directed Algorithm}
\label{sec:force-directed-algorithm}


We present a force-directed algorithm that given a reasonably nice
initial drawing, probably produced by a classical force-directed
algorithm, results in a drawing of high total resolution. The
algorithm reinforces the classical force-directed algorithm of Eades
\cite{Ea84} with some additional forces exerted to the nodes of the
graph. More precisely, these additional forces involve springs and
some extra attractive or repulsive forces on nodes with degree
greater than one and on end-nodes of edges that are involved in an
edge crossing. This aims to ensure that the angles between incident
edges and the angles formed by pairs of crossing edges will be as
large as possible.
The classical force-directed algorithm of Eades \cite{Ea84} models
the nodes of the graph as electrically charged particles that repel
each other, and its edges by springs in order to attract adjacent
nodes. In our approach, we use only the attractive forces of the
force-directed algorithm of Eades (denoted by
$\mathcal{F}_{\spring}$), which follow the formula:


$$\mathcal{F}_{\spring}(p_u, p_v) = C_{\spring} \cdot \log
{\frac{||p_u-p_v||}{\ell_{\spring}}}\cdot
\overrightarrow{p_up_v},~(u,v)\in E$$


\noindent where $C_{\spring}$ and $\ell_{\spring}$ capture the
stiffness and the natural length of the springs, respectively.
Recall that
$\overrightarrow{p_up_v}$ denotes the unit length vector from $p_u$
to $p_v$.

We first describe our approach for the case where two edges, say
$e=(u,v)$ and $e'=(u',v')$, are involved in a crossing. Let $p_c$ be
their intersection point.
W.l.o.g., we assume that $u$ and $u'$ are to the left of $v$ and
$v'$, respectively, $y_{u'}<y_{u}$ and $y_{v}<y_{v'}$, as in
Fig.\ref{fig:crossing case}. Let $\theta_{vv'}$ be the angle formed
by the line segments $\overline{p_cp_v}$ and $\overline{p_cp_{v'}}$
in counter-clockwise order around $u$ from $\overline{p_cp_v}$ to
$\overline{p_cp_{v'}}$. In order to avoid confusion, we assume that
$\theta_{vv'}=\theta_{v'v}$, i.e., we abuse the counter-clockwise
measurement of the angles that would result in
$\theta_{vv'}=2\pi-\theta_{v'v}$. Similarly, we define the remaining
angles of Fig.\ref{fig:crossing case}. Obviously,
$\theta_{vv'}+\theta_{v'u}=\pi$. Ideally, we would like
$\theta_{vv'}=\theta_{v'u}=\frac{\pi}{2}$, i.e., $e$ and $e'$ form a
right angle crossing. As we will shortly see, the magnitude of the
forces that we apply on the nodes $u,u',v$ and $v'$ depends on
(a)~the angles $\theta_{vv'}$ and $\theta_{v'u}$ and (b)~the lengths
of the line segments $\overline{p_cp_u}$, $\overline{p_cp_{u'}}$,
$\overline{p_cp_v}$ and $\overline{p_cp_{v'}}$.

The physical model that describes our approach is illustrated in
Fig.\ref{fig:crossing case}. Initially, for each pair of crossing
edges at point $p_c$, we place springs connecting consecutive nodes
in the counter-clockwise order around $p_c$, as in
Fig.\ref{fig:crossing-case-springs}. The magnitude of the forces due
to these springs should capture our preference for right angles.
Consider the spring connecting $v$ and $v'$. The remaining ones are
treated
symmetrically. We set the natural length, say
$\ell_{\spring}^{vv'}$, of the spring connecting the nodes $v$ and
$v'$ to be $\sqrt{||p_c-p_v||^2+||p_c-p_v'||^2}$. This quantity
corresponds to the length of the line segment that connects $v$ and
$v'$ in the optimal case where $\theta_{vv'}=\frac{\pi}{2}$. So, in
an equilibrium state of this model on a graph consisting only of $e$
and $e'$,
$e$ and $e'$ will form a right angle.
Concluding, the force on $v$ due to the spring of $v'$ is defined as
follows:

$$\mathcal{F}_{\spring}^{\cros}(p_v, p_{v'}) = C_{\spring}^{\cros} \cdot \log
{\frac{||p_v-p_{v'}||}{\ell_{\spring}^{vv'}}}\cdot
\overrightarrow{p_vp_{v'}}$$

\begin{figure}[tb]
  \centering
  \begin{minipage}[b]{.52\textwidth}
    \centering
    \subfloat[\label{fig:crossing-case-springs}{}]
    {\includegraphics[width=.64\textwidth]{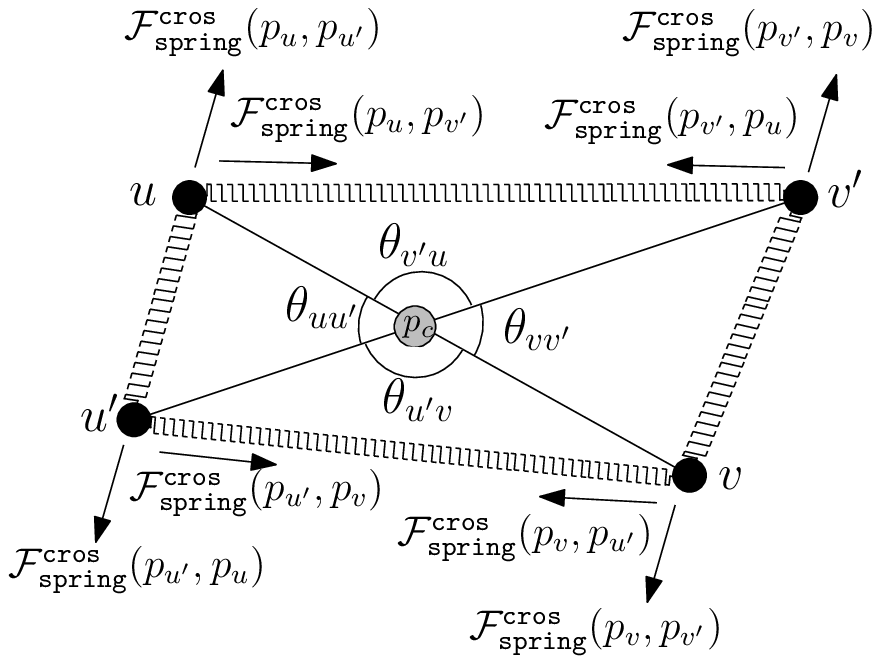}}
  \end{minipage}
  \begin{minipage}[b]{.44\textwidth}
    \centering
    \subfloat[\label{fig:crossing-case-bisection}{}]
    {\includegraphics[width=.68\textwidth]{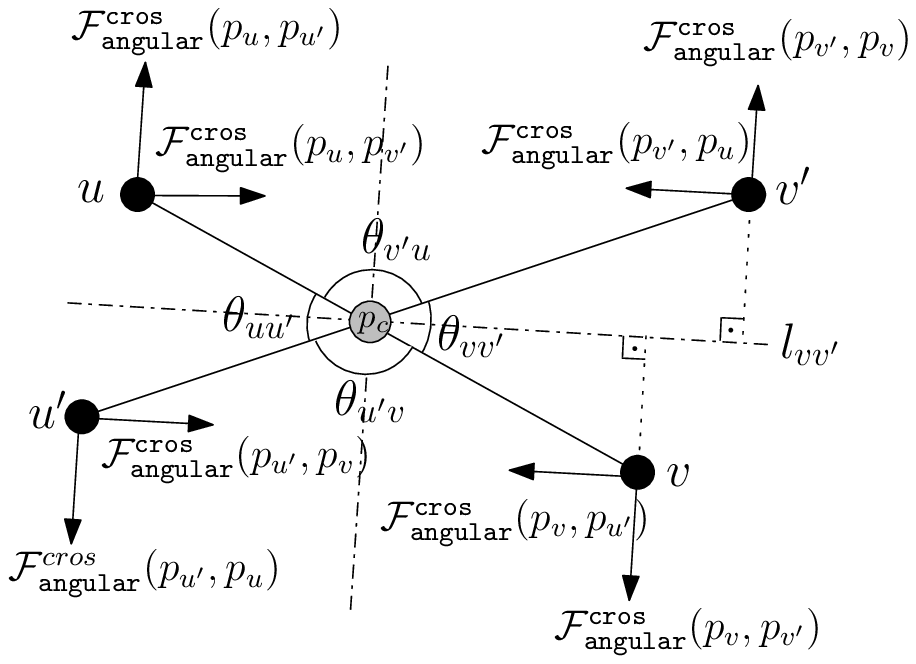}}
  \end{minipage}
  \caption{Forces applied on nodes in order to maximize the crossing resolution. (a)~Springs on nodes involved in crossing. (b)~Repelling or attractive forces based on the angles. }
  \label{fig:crossing case}
\end{figure}

The remaining forces of Fig.\ref{fig:crossing-case-springs} are
defined similarly. Note that in the formula above, the constant
$C_{\spring}^{\cros}$ is used to control the stiffness of the
springs.


Our preference for right angle crossings can be also captured using
the angles $\theta_{vv'}$ and $\theta_{v'u}$ (see
Fig.\ref{fig:crossing-case-bisection}). As in the previous case, we
restrict our description on the angle formed by the line
segments $\overline{p_cp_v}$ and $\overline{p_cp_{v'}}$. Ideally, we
would like to exert forces on the nodes $v$ and $v'$ such that:
(i)~when $\theta_{vv'} \rightarrow 0$, the magnitude of the force is
very large (in order to repel $v$ and $v'$), and (ii)~when
$\theta_{vv'} \rightarrow \frac{\pi}{2}$, the magnitude of the force
is very small. A function, say $f:\mathbb{R}\rightarrow\mathbb{R}$,
which captures this property is: $f(\theta) =
\frac{|\frac{\pi}{2}-\theta|}{\theta}$. Having specified the
magnitude of the forces,
we set the direction of the force on $v$ (due to $v'$) to be
perpendicular to the line that bisects the angle $\theta_{vv'}$
(refer to the dash-dotted line $l_{vv'}$ of
Fig.\ref{fig:crossing-case-bisection}), or equivalently parallel to
the unit length vector
$\mathtt{Perp}(\mathtt{Bsc}(\overrightarrow{p_cp_v},
\overrightarrow{p_cp_{v'}}))$. Recall that $\mathtt{Perp}$ and
$\mathtt{Bsc}$ refer to the perpendicular and bisector vectors,
respectively (see Section~\ref{sec:preliminaries}). It is clear that
if $\theta_{vv'}<\frac{\pi}{2}$, the forces on $v$ and $v'$ should
be repulsive (in order to enlarge the angle between them), otherwise
attractive. This can be captured by the $sign$ function.
We conclude with the following formula which expresses the force on
$v$ due to $v'$.
$$\mathcal{F}_{\ang}^{\cros}(p_v, p_{v'}) = C_{\ang}^{\cros} \cdot sign(\theta_{vv'}- \frac{\pi}{2}) \cdot
f(\theta_{vv'}) \cdot
\mathtt{Perp}(\mathtt{Bsc}(\overrightarrow{p_cp_v},
\overrightarrow{p_cp_{v'}}))$$

\noindent where constant $C_{\ang}^{\cros}$ controls the strength of
the force. Similarly, we define the remaining forces of
Fig.\ref{fig:crossing-case-bisection}.


Consider now a node
$u$ that is incident to $d(u)$ edges, say $e_0=(u,v_0)$,
$e_1=(u,v_1)$, $\ldots$, $e_{d(u)-1}=(u,v_{d(u)-1})$. We assume that
$e_0,e_1,\ldots,e_{d(u)-1}$ are consecutive in the counter-clockwise
order around $u$ in the drawing of the graph (see
Fig.\ref{fig:angular-spring-forces-only}). Similarly to the case of
two crossing edges, we proceed to connect the endpoints of consecutive edges around
$u$ by springs, as in Fig.\ref{fig:angular-spring-forces-only}. In
this case, the natural length of each spring, should capture our
preference for angles equal to $\frac{2\pi}{d(u)}$.
In order to achieve this, we proceed as follows: For each
$i=0,1,\ldots, d(u)-1$, we set the natural length, say
$l_{\spring}^i$, of the spring connecting $v_i$ with
$v_{(i+1)mod(d(u))}$,  to be:\\

$\ell_{\spring}^{i}=\sqrt{||e_i||^2+||e_{(i+1)mod(d(u))}||^2-2\cdot||e_i||
\cdot ||e_{(i+1)mod(d(u))}|| \cdot \cos{(2\pi/d(u))}}$\\

\noindent where $||e||$ is used to denote the length of the edge
$e\in E$ in the drawing of the graph. The quantity
$\ell_{\spring}^{i}$ corresponds to the length of the line segment
that connects $v_i$ with $v_{(i+1)mod(d(u))}$ in the optimal case
where the angle formed by $e_i$ and $e_{(i+1)mod(d(u))}$ is
$\frac{2\pi}{d(u)}$.
Therefore, the spring forces between consecutive edges should follow
the formula:\\

$\mathcal{F}_{\spring}^{\angular}(p_{v_i}, p_{v_{(i+1)mod(d(u))}};u)
= C_{\spring}^{\angular} \cdot \log
{\frac{||p_{v_i}-p_{v_{(i+1)mod(d(u))}}||}{\ell_{\spring}^{i}}}\cdot
\overrightarrow{p_{v_i}p_{v_{(i+1)mod(d(u))}}}$\\

\noindent where the quantity $C_{\spring}^{\angular}$ is a constant
which captures the stiffness of the spring.

\begin{figure}[tb]
  \centering
  \begin{minipage}[b]{.48\textwidth}
    \centering
    \subfloat[\label{fig:angular-spring-forces-only}{}]
    {\includegraphics[width=.62\textwidth]{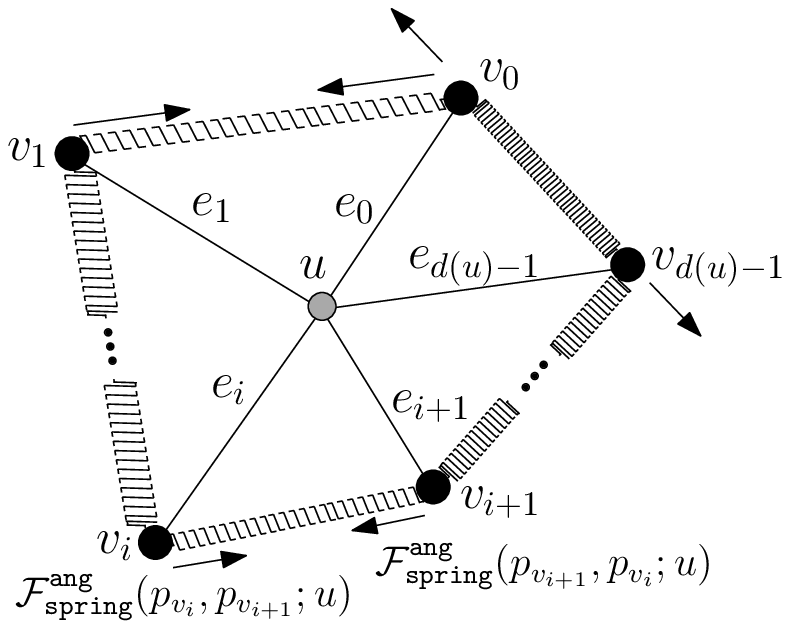}}
  \end{minipage}
  \begin{minipage}[b]{.48\textwidth}
    \centering
    \subfloat[\label{fig:angle-case-bisection}{}]
    {\includegraphics[width=.62\textwidth]{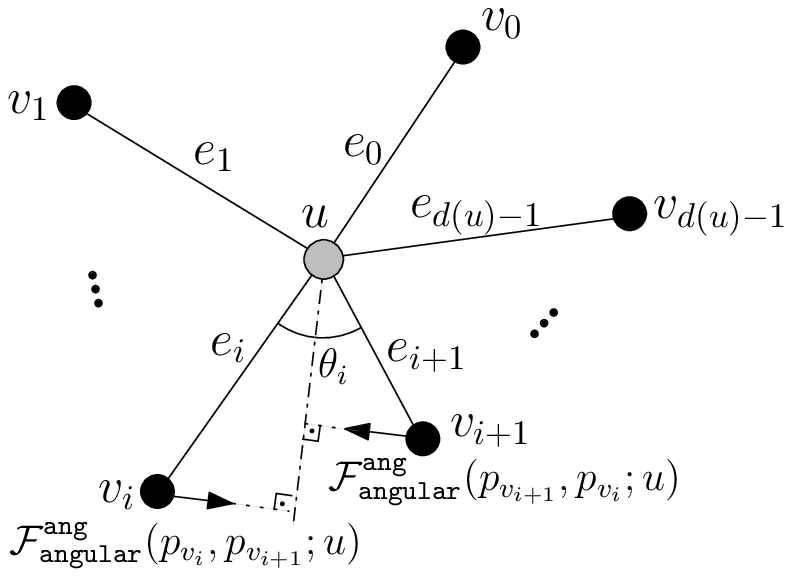}}
  \end{minipage}
  \caption{Forces applied on nodes in order to maximize the angular resolution. (a)~Springs on consecutive edges around $u$. (b)~Repelling or attractive forces based on the angles.}
  \label{fig:angular case}
\end{figure}

Let now $\theta_{i}$ be the angle formed by $e_i$ and
$e_{(i+1)mod(d(u))}$, measured in counter-clockwise direction from
$e_i$ to $e_{(i+1)mod(d(u))}$, $i=0,1,\ldots, d(u)-1$. Similarly to
the case of two crossing edges, we exert forces on $v_i$ and
$v_{(i+1)mod(d(u))}$ perpendicular to the bisector of $\theta_{i}$,
as illustrated in Fig.\ref{fig:angle-case-bisection}. However, in
this case we need a magnitude function such that: (i)~when $\theta_i
\rightarrow 0$, the magnitude of the force is very large (in order
to repel $v_i$ and $v_{(i+1)mod(d(u))}$), and (ii)~when $\theta_i
\rightarrow \frac{2\pi}{d(u)}$, the magnitude of the force is very
small. Such a function, say $g:\mathbb{R}\times
V\rightarrow\mathbb{R}$, is: $g(\theta;u) =
\frac{|\frac{2\pi}{d(u)}-\theta|}{\theta}$. Having fully specified
the forces applied on the endpoints of consecutive edges and their
directions, we are now ready to provide the exact formulas that the
forces follow:

\begin{align*}
\mathcal{F}_{\ang}^{\angular}(p_{v_i}, p_{v_{(i+1)mod(d(u))}};u)
=&~C_{\ang}^{\angular} \cdot sign(\theta_{i}- \frac{2\pi}{d(u)})
\cdot g(\theta_{i};u) \cdot \\ ~ &
\mathtt{Perp}(\mathtt{Bsc}(\overrightarrow{p_up_{v_i}},
\overrightarrow{p_{u}p_{v_{(i+1)mod(d(u))}}}))
\end{align*}

\noindent where $C_{\ang}^{\angular}$ is a constant to control the
strength of the force. In the work of Lin and Yen \cite{LY05}, the
above technique which applies large repelling forces perpendicular
to the bisector of the angle, when the angle is small, is referred
to as edge-edge repulsion. However, in their work, they use
different metric to control the magnitude function. Of course, we
could also use their metric, but we prefer the one reported above in
order to maintain a uniform approach in both crossing and angular
cases. In Section \ref{sec:experimental-results}, we provide an
experimental comparison of these techniques.

Note that by setting zero values to the constants
$C_{\spring}^{\cros}$, $C_{\ang}^{\cros}$ or
$C_{\spring}^{\angular}$, $C_{\ang}^{\angular}$, our algorithm can
be configured to maximize the angular, or the crossing resolution
only, respectively.



On each iteration, our algorithm computes three types of
forces. Computing the attractive forces of the classical
force-directed model among pairs of adjacent nodes of the graph
requires $O(E)$ time per iteration. The computation of the forces
due to the edge crossings needs $O(E^2)$ time, assuming a straight
forward algorithm that in $O(E^2)$ time reports all pairwise
crossing edges. Finally, the computation of the forces due to the
angles between consecutive edges can be done in $O(E+V
d(G)\log{d(G)})$ time per iteration, where $d(G)$ denotes the degree
of the graph, since we first sort the incident edges of each
node of the graph in cyclic order. Summarizing the above, each
iteration of our algorithm takes $O(E^2 + Vd(G)\log{d(G)})$ time.

The time complexity can be improved using standard techniques from
computational geometry \cite{PS85}. If $K$ is the number of
pairwise-crossing edges, then the $K$ intersections can be reported
in $O(K+E\log^2{E}/\log{\log{E}})$ time \cite[pp.277]{PS85}, which
leads to a total complexity $O(K+E\log^2{E}/\log{\log{E}} +
Vd(G)\log{d(G)})$ per iteration.

\eat{
\begin{algorithm2e}[t!hb]

\Input{An indirected graph $G$ and an initial placement $p=(p_v)_{v \in V}$.}

\Output{A drawing of $G$ with angular and crossing resolution  as
large as possible.}

\caption{\textsc{Force-Directed Algorithm}}

\label{alg:force-directed-algorithm}

\BlankLine

\For {($t \leftarrow 1$ \KwTo \textsc{Iterations})}
{
    \BlankLine
    \cmt{Type $1$: Attractive and repulsive forces due to springs and electric forces.}

    \ForEach{$u \in V$}
    {
       $\mathcal{F}_u(t) \leftarrow \displaystyle\sum\limits_{v: (u,v) \in E} \mathcal{F}_{\spring}(p_v,p_u) + \displaystyle\sum\limits_{v: \in V \times V} \mathcal{F}_{\ang}(p_v,p_u)$ \;
    }

    \BlankLine
    \cmt{Type $2$: Forces applied on nodes in order to maximize the crossing resolution.}

    \ForEach{(pair of intersecting edges $e=(u,v)$ and $e'=(u',v')$)}
    {
        \BlankLine
        \comment{ The relative positions of edges $e$ and $e'$ are illustrated in Fig.\ref{fig:crossing case}.}\;
        \BlankLine
        $\mathcal{F}_{v}(t) += \mathcal{F}_{\spring}^{\cros}(p_v, p_{v'}) + \mathcal{F}_{\spring}^{\cros}(p_v, p_{u'}) + \mathcal{F}_{\ang}^{\cros}(p_v, p_{v'}) + \mathcal{F}_{\ang}^{\cros}(p_v, p_{u'})$\;
        \BlankLine
        $\mathcal{F}_{v'}(t) += \mathcal{F}_{\spring}^{\cros}(p_{v'}, p_u) + \mathcal{F}_{\spring}^{\cros}(p_{v'}, p_v) + \mathcal{F}_{\ang}^{\cros}(p_{v'}, p_{u}) + \mathcal{F}_{\ang}^{\cros}(p_{v'}, p_{v})$\;
        \BlankLine
        $\mathcal{F}_{u}(t) += \mathcal{F}_{\spring}^{\cros}(p_u, p_{v'}) + \mathcal{F}_{\spring}^{\cros}(p_u, p_{u'}) + \mathcal{F}_{\ang}^{\cros}(p_u, p_{v'}) + \mathcal{F}_{\ang}^{\cros}(p_u, p_{u'})$ \;
        \BlankLine
        $\mathcal{F}_{u'}(t) += \mathcal{F}_{\spring}^{\cros}(p_{u'}, p_{v}) + \mathcal{F}_{\spring}^{\cros}(p_{u'}, p_{u}) + \mathcal{F}_{\ang}^{\cros}(p_{u'}, p_{v}) + \mathcal{F}_{\ang}^{\cros}(p_{u'}, p_{v})$\;
    }

    \BlankLine
    \cmt{Type $3$: Forces applied on nodes in order to maximize the angular resolution.}

    \ForEach{($u \in V$ with $d(u)>1$)}
    {
        \BlankLine
        $e_0, \ldots, e_{d(u)-1}$ $\leftarrow$ incident edges of $u$ in counter-clockwise order (see Fig.\ref{fig:angular case})\;
        \BlankLine
        Let $e_i=(u,v_i), i=0, \ldots d(u)-1$\;
        \BlankLine
        \For {($i \leftarrow 0$ \KwTo $d(u)-1$)}
        {
            $\mathcal{F}_{u}(t) += \mathcal{F}_{\spring}^{\angular}(p_{v_i}, p_{v_{(i+1)mod(d(u))}};u) + \mathcal{F}_{\ang}^{\angular}(p_{v_i}, p_{v_{(i+1)mod(d(u))}};u)$\;
        }
    }

    \BlankLine
    \ForEach{$u \in V$}
    {
        \BlankLine
        $p_u \leftarrow p_u + \delta \cdot \mathcal{F}_u(t)$ \;
    }
}

\end{algorithm2e}

}

\subsection{Experimental Results}
\label{sec:experimental-results}

In this section, we present the results of the experimental
evaluation of our algorithm. Apart from our algorithm, we have
implemented the force directed algorithms of Eades~\cite{Ea84} and
Lin and Yen \cite{LY05}. The implementations are in Java using the
yFiles library (http://www.yworks.com). The experiment was performed
on a Linux machine with 2.00 GHz CPU and 2GB RAM using the Rome
graphs (a collection of around 11.500 graphs) obtained from
graphdrawing.org. Fig.\ref{fig:grafo1012999}, illustrates a drawing
of a Rome graph with $99$ nodes and $135$ edges produced by our
force directed algorithm.

\begin{figure}[htb]
  \centering
  \includegraphics[width=.68\textwidth]{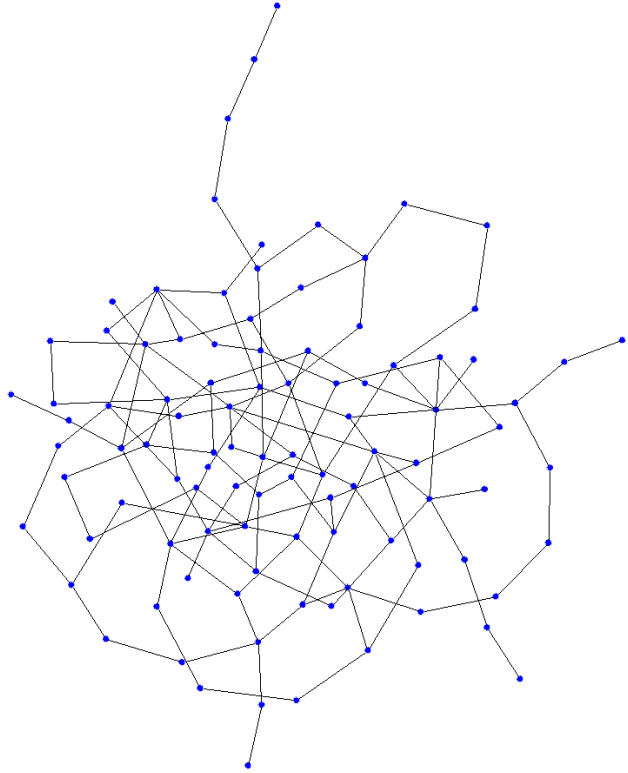}
  \caption{A drawing of Rome graph grafo10129.99 consisting of $99$ nodes and $135$ edges with angular resolution $20.15^o$ and crossing resolution $26.12^o$.}
  \label{fig:grafo1012999}
\end{figure}

The experiment was performed as follows. First, each Rome graph was
laid out using the SmartOrganic layouter of yFiles. This layout was
the input layout for all three algorithms, in order to speed up the
experiment and overcome problems associated with local minimal traps
especially in large graphs. If both the angular and the crossing
resolution between two consecutive iterations of each algorithm were
not improved more that 0.001 degrees, we assumed that the algorithm
has converged and we did not proceed any more. The maximum number of
iterations that an algorithm could perform in order to converge was
100.000. We note that the termination condition is quite strict and
demands a large number of iterations.
Our algorithm is evaluated as (a)~Crossing-Only, (b)~Angular-Only
and (c)~Mixed. The results are illustrated in
Fig.\ref{fig:experiment-results} and should be viewed in color.

\emph{The Angular Resolution Maximization Problem:} Refer to
Fig.\ref{fig:angular-resolution}. Our experimental analysis shows
that our Angular-Only algorithm achieves, on average, better angular
resolution. The angular resolution of our Mixed algorithm is almost
equal, on average, to the one of Lin and Yen. Note that the
algorithm of Lin and Yen, in contrast to ours, does not modify the
embedding of the initial layout~\cite{LY05} (i.e., it needs a
close-to-final starting layout and improve on it). This explains why
our Mixed algorithm achieves almost the same performance, in terms
of angular resolution, as the one of Lin and Yen. In $59.76$\% of
the graphs our Mixed algorithm yields a better solution compared to
Lin-Yen's algorithm with an average improvement of $6.94^o$.

\emph{The Crossing Resolution Maximization Problem:} In
Fig.\ref{fig:crossing-resolution} the data were filtered to depict
only the results of non-planar drawings produced by the algorithms
and avoid infinity values in the case of planar ones. It is clear
that our Crossing-Only algorithm results in drawings with high
crossing resolution. Our Crossing-Only algorithm performs better on
large graphs compared to the algorithm of yFiles. The average
improvement implied by our Crossing-Only algorithm is $13.63^o$
w.r.t. the yFiles algorithm.

\emph{The Total Resolution Maximization Problem:} Refer to
Fig.\ref{fig:total-resolution}. This is the most important result of
our experimental analysis. It indicates that our Mixed algorithm
applied on graphs with more than $50$ nodes constructs drawings with
total resolution of $20$ degrees, on average. Note that this is an
achievement on average and under the particular termination
condition discussed above, i.e., there is no guarantee, that it
would be maintained indefinitely. An example is given in
Fig.\ref{fig:grafo1012999}.





Finally, Fig.\ref{fig:algorithm-time} summarizes the running time
performance of the algorithms. Our algorithm needs, on average,
$7340$ milliseconds and $1298$ iterations to converge, whereas the
one of Lin and Yen $8346$ and $1952$, respectively. Note that the
time complexity of Lin-Yen's algorithm is better than ours. However,
the termination condition takes into account the crossing
improvement and therefore the algorithm of Lin and Yen needs more
iterations to converge, which explains this contradiction.

\begin{figure}[tb]
  \centering

  \begin{minipage}[b]{.49\textwidth}
    \centering
    \subfloat[\label{fig:angular-resolution}{Angular resolution results}]
    {\includegraphics[width=.75\textwidth]{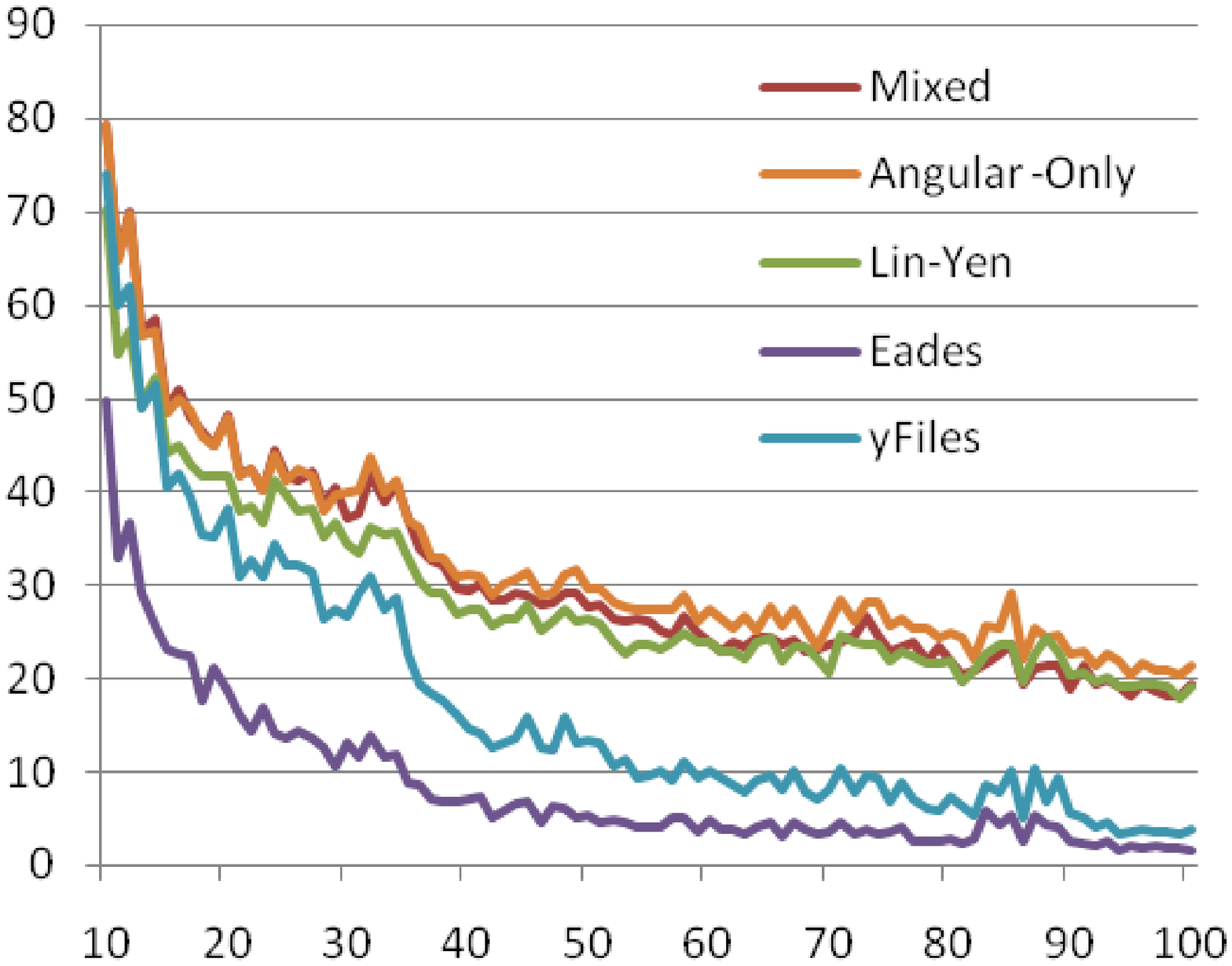}}
  \end{minipage}
  \begin{minipage}[b]{.49\textwidth}
    \centering
    \subfloat[\label{fig:crossing-resolution}{Crossing resolution results}]
    {\includegraphics[width=.75\textwidth]{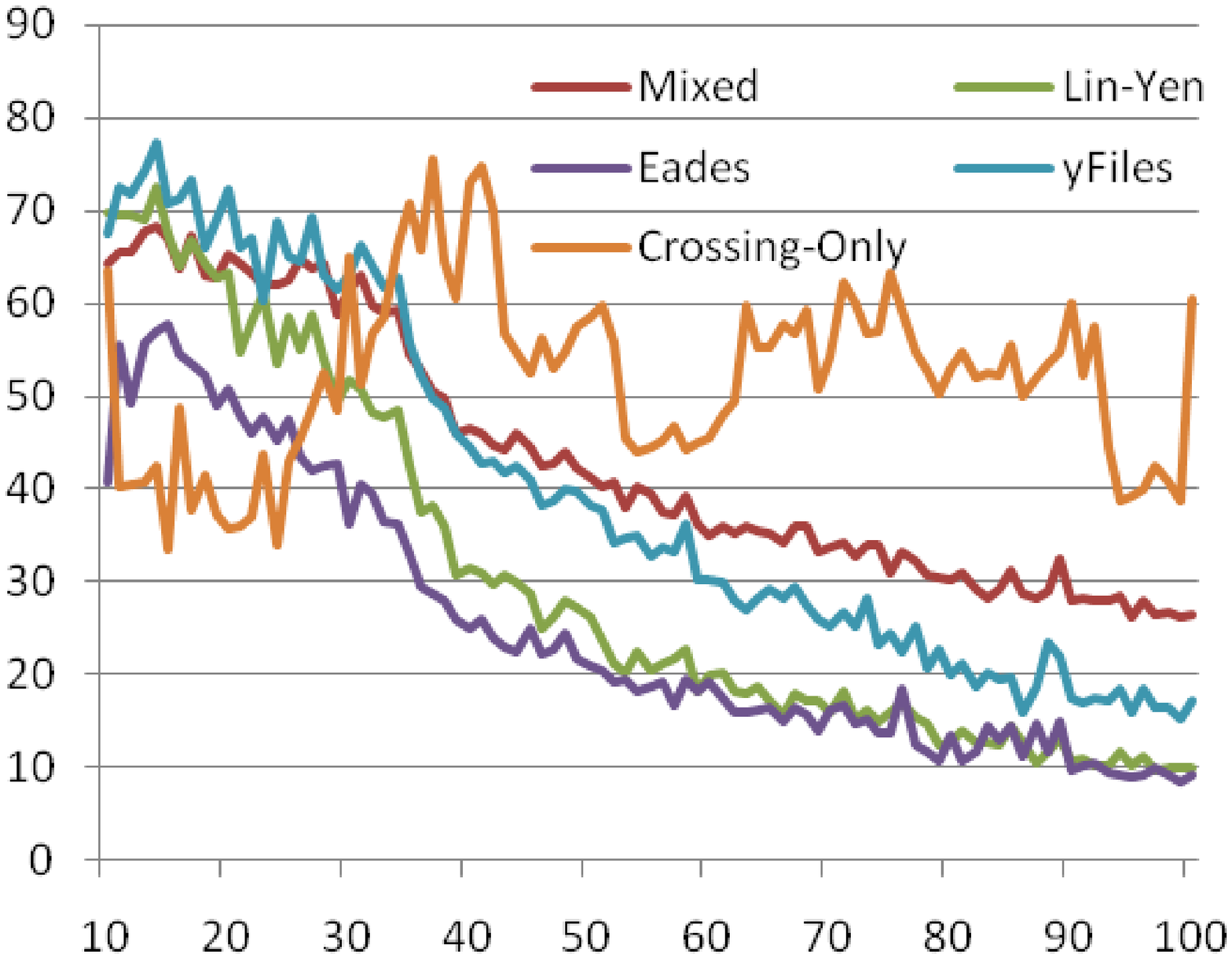}}
  \end{minipage}

  \begin{minipage}[b]{.49\textwidth}
    \centering
    \subfloat[\label{fig:total-resolution}{Total resolution results}]
    {\includegraphics[width=.75\textwidth]{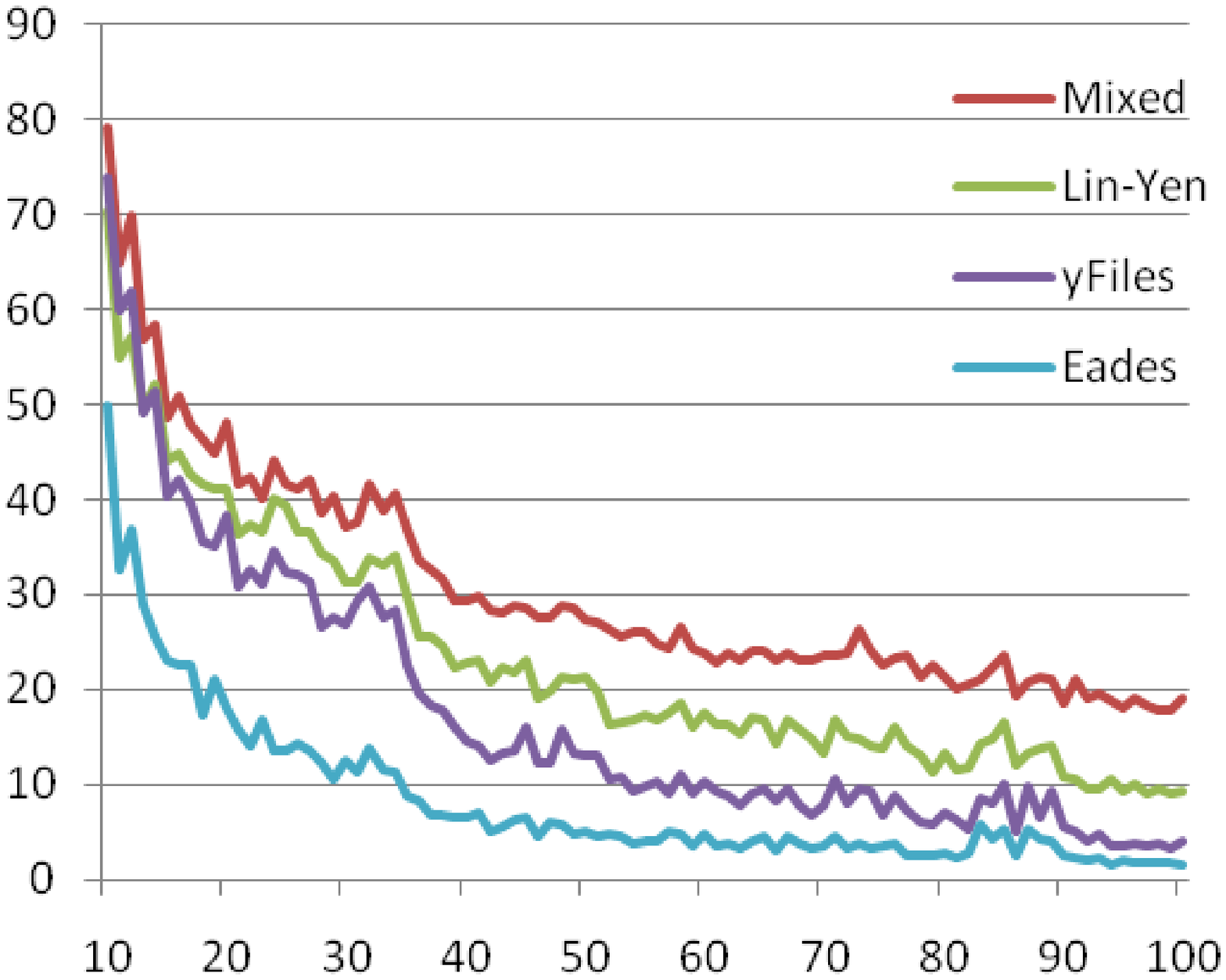}}
  \end{minipage}
  \begin{minipage}[b]{.49\textwidth}
    \centering
    \subfloat[\label{fig:algorithm-time}{Running time results}]
    {\includegraphics[width=.75\textwidth]{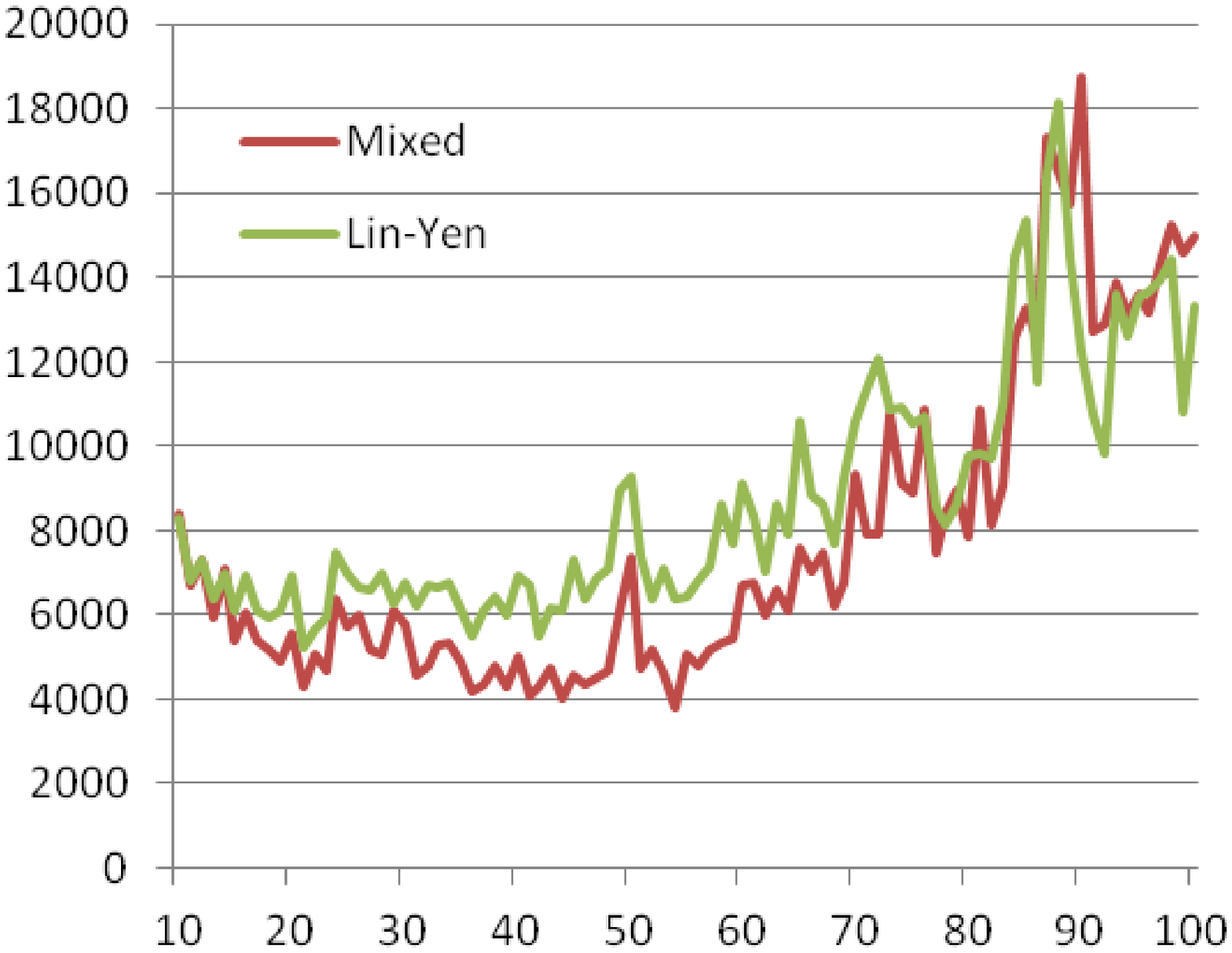}}
  \end{minipage}

  \caption{A visual presentation of our experimental results. The $X$-axis indicates the number of the nodes of the graph.
  In Fig.(a)-(c) the $Y$-axis corresponds to the resolution measured in degrees, whereas
  in Fig.(d) to the running time measured in milliseconds.}
  \label{fig:experiment-results}
\end{figure}

\section{Conclusions}
\label{sec:conclusions}
In this paper, we introduced and studied the total resolution
maximization problem. Of course, our work leaves several open
problems. It would be interesting to try to identify other classes
of graphs that admit optimal drawings. Even the case of planar
graphs is of interest, as by allowing some edges to cross (say at
large angles), we may improve the angular resolution and therefore
the total resolution.

\newpage
\newpage

\bibliographystyle{abbrv}

\end{document}